\DeclareMathOperator*{\E}{\mathbb{E}}
\DeclareMathOperator*{\Cov}{Cov}
\DeclareMathOperator{\given}{\:\vert\:}
\DeclareMathOperator*{\argmax}{argmax}
\DeclareMathOperator{\OPT}{OPT}
\DeclareMathOperator{\ALG}{ALG}
\newcommand{\calA}{\mathcal{A}}
\newcommand{\calB}{\mathcal{B}}
\newcommand{\calD}{\mathcal{D}}
\newcommand{\calI}{\mathcal{I}}
\newcommand{\calM}{\mathcal{M}}
\newcommand{\calP}{\mathcal{P}}
\newcommand{\IGNORE}[1]{}
\newcounter{note}[section]
\newcommand{\PreserveBackslash}[1]{\let\temp=\\#1\let\\=\temp}
\newcolumntype{C}[1]{>{\PreserveBackslash\centering}p{#1}}
\newcolumntype{R}[1]{>{\PreserveBackslash\raggedleft}p{#1}}
\newcolumntype{L}[1]{>{\PreserveBackslash\raggedright}p{#1}}
\newtheorem{theorem}{Theorem}[section]
\newtheorem{claim}[theorem]{Claim}
\Crefname{claim}{Claim}{Claims}
\newtheorem{proposition}[theorem]{Proposition}
\newtheorem{lemma}[theorem]{Lemma}
\theoremstyle{definition}
\newtheorem{definition}[theorem]{Definition}
\newtheorem{remark}[theorem]{Remark}
\newcommand{\namelabel}[1]{%
  \phantomsection
  \renewcommand{\@currentlabel}{#1}
  \label{#1}
}
\tikzstyle{block}=[draw opacity=0.7,line width=1.4cm]
\tikzstyle{graphnode}=[circle, draw, fill=black!20, inner sep=0pt, minimum width=6pt]
\tikzstyle{point}=[circle, draw, fill=black!30, inner sep=0pt, minimum width=1pt]
\tikzstyle{input}=[rectangle, draw, fill=black!75,inner sep=3pt, inner ysep=3pt, minimum width=4pt]
\tikzstyle{unmatched}=[graphnode,fill=black!0]
\tikzstyle{shaded}=[graphnode,fill=black!20]
\tikzstyle{matched}=[graphnode,fill=black!100]  	
\tikzstyle{matching} = [ultra thick]
\tikzset{
    >=stealth',
    pil/.style={
           ->,
           thick,
           shorten <=2pt,
           shorten >=2pt,}
}
\tikzset{->-/.style={decoration={
  markings,
  mark=at position .5 with {\arrow{>}}},postaction={decorate}}}
\title{Submodular Dominance and Applications}
\author{Frederick V. Qiu\thanks{(fqiu@princeton.edu)  Department of Computer Science,  Princeton University.} \and Sahil Singla\thanks{(ssingla@gatech.edu)  School of Computer Science,  Georgia Tech.}}
\date{\today}
\begin{document}

\maketitle

\begin{abstract}
\medskip
In submodular optimization we often deal with the expected value of a submodular function $f$ on a distribution $\calD$ over sets of elements. In this work we study such submodular expectations for negatively dependent distributions. We introduce a natural notion of negative dependence, which we call \emph{Weak Negative Regression} (WNR), that generalizes both Negative Association and Negative Regression. We observe that WNR distributions satisfy \emph{Submodular Dominance}, whereby the expected value of $f$ under $\calD$ is at least the expected value of $f$ under a product distribution with the same element-marginals.

\medskip
Next, we give several applications of Submodular Dominance to submodular optimization. In particular, we improve the best known submodular prophet inequalities, we develop new rounding techniques for polytopes of set systems that admit negatively dependent distributions, and we prove existence of contention resolution schemes for WNR distributions.
\end{abstract}

\setcounter{tocdepth}{1}

{\small\tableofcontents}

\thispagestyle{empty}

\thispagestyle{empty}

\clearpage

\setcounter{page}{1}

\section{Introduction}

A function $f : 2^U \rightarrow \mathbb{R}$ on universe $U = \{1, \ldots, n\}$ is \emph{submodular} if it satisfies $f(S) + f(T) \geq f(S \cup T) + f(S \cap T)$ for all $S, T \subseteq U$. These functions capture the concept of diminishing returns, and are therefore useful in many fields such as machine learning, operations research, mechanism design, and combinatorial optimization; see books~\cite{Fujishige-Book05,Bach-Book13,Schrijver-Book03,NRTV2007}. 

Although $f$ is a discrete function, for many applications it is useful to define a continuous relaxation $f_{\text{cont}}: [0,1]^n \rightarrow \mathbb{R}$ of $f$, since that allows us to use techniques from continuous optimization. Here, by a relaxation we mean that $f_{\text{cont}}$ equals $f$ at the indicator vectors of the sets, i.e., $f_{\text{cont}}(\mathsf{1}_S) = f(S)$ for all $S \subseteq U$. A standard way to define such continuous relaxations is to first define a probability distribution $\calD(\mathbf{x})$ over subsets of $U$ with element-marginals $\mathbf{x} \in [0,1]^{n}$, and then define $f_{\text{cont}}(\mathbf{x})$ to be the expectation with respect to this distribution, i.e., $f_{\text{cont}}(\mathbf{x}) \coloneqq \E_{S \sim \calD(\mathbf{x})}[f(S)]$, where $S$ is a random set drawn from $\calD(\mathbf{x})$. For example, the popular \emph{multilinear relaxation} $F(\mathbf{x})$ is defined by taking $\calD(\mathbf{x})$ to be the product distribution with marginals $\mathbf{x}$. Other examples include the convex closure relaxation $f^-(\mathbf{x})$ (which is equivalent to the Lov\'asz extension for submodular functions), the concave closure relaxation $f^+(\mathbf{x})$, and the relaxation $f^*(\mathbf{x})$~\cite{Vondrak07}. Studying the properties of submodular expectations for these distributions has been a fruitful direction, which has led us to several optimal/approximation algorithms for submodular optimization~\cite{Bach-Book13,Vondrak-STOC08,CCPV-SICOMP11,FNS-FOCS11,AN15,EneN-FOCS16}.

Given the success of the above continuous relaxations, it is natural to ask what other continuous relaxations, or equivalently, what other submodular expectations and distributions $\calD(\mathbf{x})$ could be defined that are useful for new or improved applications. In this work, we study submodular expectations for negatively dependent distributions. Besides being of intellectual interest, we use them to improve the best known submodular prophet inequalities, to develop new rounding techniques, and to design contention resolution schemes for negatively dependent distributions.


\subsection{Submodular Dominance}

Since the multilinear extension $F$ is commonly employed in combinatorial optimization, one avenue to explore other continuous relaxations is by comparing them to $F$.

\begin{definition}[Submodular Dominance] \label{def:SubmodularDominance}
    A distribution $\calD$ over $2^U$ with marginals $\mathbf{x} \in [0, 1]^n$ satisfies \emph{Submodular Dominance} if for every submodular function $f : 2^U \rightarrow \mathbb{R}$,
    \[
        \E_{S \sim \calD}[f(S)] \quad \geq \quad F(\mathbf{x}) \enspace .
    \]
\end{definition}

Shao~\cite{Shao00} studied a similar concept that he called a comparison theorem, which involved a subclass of submodular functions. Christofides and Vaggelatou~\cite{CV04} later studied what they called the supermodular ordering, which is essentially equivalent to Submodular Dominance. Both viewed the problem through the lens of probability theory, whereas we approach it from the standpoint of combinatorial optimization.

It is not difficult to see how one might apply Submodular Dominance, e.g., it immediately yields an algorithm to round multilinear extension subject to feasibility constraints. However, Submodular Dominance implies a much wider variety of results in stochastic settings, where most of our current understanding relies on the independence of random variables. By relating product distributions to more complex distributions, Submodular Dominance allows us to improve existing results and study more general problems.


\subsection{Negative Dependence and Submodular Dominance}

Positive correlations can only decrease the expectations of submodular functions due to their diminishing marginal returns, so we turn our attention to negatively dependent distributions. Pemantle initiated a systematic study of such distributions in~\cite{Pemantle99}. In this work, we introduce the following generalization of Negative Association (NA) and Negative Regression (NR),  two popular notions of negative dependence (details in \Cref{sec:WNR}).

\begin{definition}[WNR] \label{def:WNR}
    A distribution $\calD$ over $2^U$ satisfies \emph{Weak Negative Regression} (WNR) if for any $i \in U$ and any monotone function $f : 2^U \rightarrow \mathbb{R}$,\footnote{A function $f$ is monotone if it satisfies $f(S) \leq f(T)$ for all $S \subseteq T$. Elements should be taken as singleton sets depending on context, e.g., $S \setminus i$ means $S \setminus \{i\}$.}
    \begin{align} \label{eq:WNRCondition}
        \E_{S \sim \calD}[f(S \setminus i) \given i \in S] \quad \leq \quad \E_{S \sim \calD}[f(S \setminus i) \given i \not\in S] \enspace .
    \end{align}
\end{definition}

Equivalently, $\calD$ is WNR if $S \setminus i$ conditioned on $i \not\in S$ stochastically dominates $S \setminus i$ conditioned on $i \in S$ for all $i \in U$. This captures an intuitive notion of negative dependence where conditioning on including an element lowers the probability of other inclusion events. WNR distributions satisfy Submodular Dominance as well as many desirable closure properties.

\medskip
\textbf{Submodular Dominance for Negatively Dependent Distributions.}
Christofides and Vag-gelatou~\cite{CV04} proved that NA distributions over continuous random variables satisfy Submodular Dominance for a continuous generalization of submodular functions. We  strengthen their result in \Cref{sec:CompIneq} in the setting of Bernoulli random variables from NA to WNR distributions, a strict superset of the union of NA and NR distributions.

\begin{theorem}[restate=CompIneqSufficient,name=] \label{thm:CompIneqSufficient}
    WNR distributions satisfy Submodular Dominance.
\end{theorem}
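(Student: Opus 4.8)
The plan is to argue by induction on $|U|$, peeling off one coordinate at a time and showing that making that coordinate independent (while keeping the marginal law of the remaining coordinates untouched) can only decrease $\E_{S\sim\calD}[f(S)]$. The base case $|U|\le 1$ is immediate, since there $\E_{S\sim\calD}[f(S)]$ and $F(\mathbf{x})$ are literally the same expression.

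For the inductive step, fix an element $n\in U$ and set $x_n=\Pr_{S\sim\calD}[n\in S]$; we may assume $0<x_n<1$, as the cases $x_n\in\{0,1\}$ follow directly from the inductive hypothesis applied on $U\setminus n$. Let $A$ and $B$ be the conditional laws of $S\setminus n$ given $n\in S$ and given $n\notin S$, and let $M=x_nA+(1-x_n)B$ be the law of $S\setminus n$ on $2^{U\setminus n}$. Let $\calD''$ be the distribution on $2^U$ obtained by drawing $T\sim M$ and then adding $n$ independently with probability $x_n$. Expanding $\E_{S\sim\calD}[f(S)]-\E_{S\sim\calD''}[f(S)]$ and substituting $\E_M[\cdot]=x_n\E_A[\cdot]+(1-x_n)\E_B[\cdot]$, everything telescopes to
\[
  \E_{S\sim\calD}[f(S)]-\E_{S\sim\calD''}[f(S)] \;=\; x_n(1-x_n)\big(\E_{T\sim A}[h(T)]-\E_{T\sim B}[h(T)]\big),
\]
where $h(T)\coloneqq f(T\cup n)-f(T)$. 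Submodularity of $f$ says exactly that $h$ is monotone non-increasing, so $-h$ is monotone in the sense of \Cref{def:WNR} (after lifting it to $2^U$ via $R\mapsto -(f(R\cup n)-f(R\setminus n))$, which \eqref{eq:WNRCondition} with $i=n$ only ever evaluates on subsets of $U\setminus n$ anyway); WNR then gives $\E_A[-h]\le\E_B[-h]$, i.e.\ $\E_A[h]\ge\E_B[h]$, so $\E_{S\sim\calD}[f(S)]\ge\E_{S\sim\calD''}[f(S)]$.

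It remains to analyze $\calD''$, and two easy observations finish it. First, $M$ is WNR on $2^{U\setminus n}$: given a monotone $h'\colon 2^{U\setminus n}\to\mathbb{R}$ and $i\in U\setminus n$, apply \eqref{eq:WNRCondition} for $\calD$ to the monotone function $R\mapsto h'(R\setminus n)$ on $2^U$; since $i\ne n$ the events $\{i\in S\}$, $\{i\notin S\}$ and the conditional laws involved are unchanged by $R\mapsto R\setminus n$, so the inequality transfers verbatim. Second, $g\colon 2^{U\setminus n}\to\mathbb{R}$ defined by $g(T)\coloneqq x_nf(T\cup n)+(1-x_n)f(T)$ is submodular, being a nonnegative combination of the submodular functions $T\mapsto f(T\cup n)$ and $T\mapsto f(T)$. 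By construction $\E_{S\sim\calD''}[f(S)]=\E_{T\sim M}[g(T)]$, and the multilinear extension of $g$ evaluated at $\mathbf{x}|_{U\setminus n}$ equals $F(\mathbf{x})$ (draw $T$ from the product law with marginals $\mathbf{x}|_{U\setminus n}$, add $n$ independently with probability $x_n$, and one gets the product law with marginals $\mathbf{x}$). Applying the inductive hypothesis to the WNR distribution $M$ and the submodular function $g$ yields $\E_{T\sim M}[g(T)]\ge F(\mathbf{x})$, and chaining with the inequality of the previous paragraph completes the induction.

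The one step that requires genuine thought is the decoupling identity: one has to guess that the right object to compare $\calD$ against is the distribution that makes a single coordinate independent while preserving the marginal law of the remaining coordinates, and then check that the resulting difference collapses precisely to $\E_A[h]-\E_B[h]$ with $h$ the diminishing-returns marginal gain of that coordinate — which is exactly the quantity that \Cref{def:WNR} is designed to control. The remaining ingredients (closure of WNR under marginalizing out a coordinate, submodularity of $g$, and the identification of the multilinear extensions) are routine bookkeeping. In particular, no reduction to monotone $f$ is needed, since WNR is only ever invoked on $h$, which is monotone whatever $f$ is.
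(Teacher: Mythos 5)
Your proposal is correct and follows essentially the same route as the paper's proof: the same induction, the same intermediate distribution (your $\calD''$ is the paper's $\calD_k$), the same telescoping identity reducing the first inequality to the WNR condition applied to the marginal-gain function $h$, and the same use of closure under projection plus the induction hypothesis on the convex combination $g$ for the second inequality. No substantive differences.
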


It turns out that there exist distributions that satisfy Submodular Dominance but are not WNR. This raises the question: what conditions are necessary for Submodular Dominance? We first recall the classic notion of Negative Cylinder Dependence (see, e.g.,~\cite{GV18}). 

\begin{definition}[NCD] \label{def:NCD}
    A distribution $\calD$ over $2^U$ with marginals $\mathbf{x}$ satisfies \emph{Negative Cylinder Dependence} (NCD) if for any $T \subseteq U$,\footnote{$S \sim \mathbf{x}$ means $S$ is sampled from the product distribution with marginals $\mathbf{x}$.}
    \[
        \Pr_{S \sim \calD}[T \subseteq S] \quad \leq \quad \Pr_{S \sim \mathbf{x}}[T \subseteq S] \qquad \text{and} \qquad \Pr_{S \sim \calD}[T \subseteq S^c] \quad \leq \quad \Pr_{S \sim \mathbf{x}}[T \subseteq S^c] \enspace .
    \]
\end{definition}

NCD can be interpreted as saying that any subset of elements are negatively correlated.

\begin{theorem}[restate=CompIneqNecessary,name=] \label{thm:CompIneqNecessary}
    All distributions that satisfy Submodular Dominance are NCD.
\end{theorem}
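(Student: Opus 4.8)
The plan is to exhibit, for each target subset $T \subseteq U$, a specific submodular function whose expectation under $\calD$ versus under the product distribution $\mathbf{x}$ precisely recovers the desired cylinder inequality. The key observation is that $\Pr_{S\sim\calD}[T\subseteq S] = \E_{S\sim\calD}[\mathbbm{1}_S(T\subseteq S)]$, so we want to write the indicator $S \mapsto \mathbbm{1}[T\subseteq S]$ — or rather its negation — as a submodular function, apply Definition~\ref{def:SubmodularDominance}, and read off the claim. Concretely, I would first show the second inequality: define $g(S) \coloneqq -\prod_{i\in T}(1 - \mathbbm{1}[i\in S]) = -\mathbbm{1}[T\cap S = \emptyset] = -\mathbbm{1}[T\subseteq S^c]$. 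This $g$ is submodular (it is the negation of a product of nonnegative monotone-decreasing modular terms, or one can check the pairwise inequality directly on the four relevant cases), so Submodular Dominance gives $\E_{S\sim\calD}[g(S)] \geq F_g(\mathbf{x})$, i.e. $-\Pr_{S\sim\calD}[T\subseteq S^c] \geq -\prod_{i\in T}(1-x_i) = -\Pr_{S\sim\mathbf{x}}[T\subseteq S^c]$, which rearranges to exactly the second NCD inequality. By the symmetric argument with $h(S) \coloneqq -\prod_{i\in T}\mathbbm{1}[i\in S] = -\mathbbm{1}[T\subseteq S]$, which is likewise submodular, we get the first inequality $\Pr_{S\sim\calD}[T\subseteq S] \leq \Pr_{S\sim\mathbf{x}}[T\subseteq S]$.

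The steps in order: (1) Fix $T\subseteq U$. (2) Verify that $S \mapsto -\mathbbm{1}[T\subseteq S]$ is submodular — the clean way is to note it equals $-\prod_{i\in T} y_i$ evaluated at $y=\mathbf{1}_S$, and a product of indicator-coordinates is a supermodular function (equivalently, argue that "including one more element of $T$" has nondecreasing marginal effect on the product, so the negation has the submodular inequality), or just check $f(S)+f(T')\geq f(S\cup T')+f(S\cap T')$ by cases on how $S,T'$ intersect $T$. (3) Apply Submodular Dominance to this $f$, compute its multilinear extension $F(\mathbf{x}) = -\prod_{i\in T} x_i$ (the multilinear extension of a product of coordinates is that same product), and conclude the first NCD inequality. (4) Repeat with the complementary indicator $S\mapsto -\mathbbm{1}[T\cap S=\emptyset]$, whose multilinear extension is $-\prod_{i\in T}(1-x_i)$, to obtain the second NCD inequality. (5) Since $T$ was arbitrary, $\calD$ is NCD.

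I expect the only real content is the submodularity check in step (2), and even that is routine: the product function $\prod_{i\in T} y_i$ restricted to the cube is the canonical example of a supermodular function, so its negation is submodular; equivalently one can write $-\mathbbm{1}[T\subseteq S]$ as a coverage-type function. There is no optimization or probabilistic argument needed beyond unwinding definitions, because Submodular Dominance is a hypothesis we are allowed to invoke directly. The one subtlety worth a sentence in the writeup is that Definition~\ref{def:SubmodularDominance} quantifies over \emph{all} submodular $f$ (not just monotone ones), so using these non-monotone indicator functions is legitimate; if one wanted a version using only monotone submodular test functions the argument would need more care, but that is not required here.
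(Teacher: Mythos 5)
Your proposal is correct and matches the paper's proof: both apply Submodular Dominance to the test functions $-\mathbbm{1}_{T \subseteq S}$ and $-\mathbbm{1}_{T \subseteq S^c}$ and read off the two NCD inequalities. The only cosmetic difference is that you justify submodularity via supermodularity of the coordinate products, whereas the paper obtains the same functions as matroid rank functions minus a linear term.
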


This can be useful when Submodular Dominance is an easier property to prove. For example, the distribution arising from randomized swap rounding can be shown to satisfy Submodular Dominance via a straightforward convexity argument, but a direct proof that the distribution is NCD is more involved~\cite{CVZ-arXiv09}; this theorem shows that such results follow due to a natural relationship between Submodular Dominance and negative dependence rather than any algorithm specific properties.

Although NCD is necessary for Submodular Dominance, it is insufficient on its own. While this insufficiency result was previously known~\cite{CVZ-FOCS10},\footnote{Observing that certain randomized rounding algorithms give rise to distributions satisfying both Submodular Dominance and NCD, Chekuri, Vondr{\'{a}}k, and Zenklusen~\cite{CVZ-FOCS10} remarked that there exist NCD distributions which violate Submodular Dominance, so NCD was not sufficient for Submodular Dominance. Our \Cref{thm:CompIneqNecessary} shows the other direction, that Submodular Dominance  implies NCD.} we strengthen it by constructing an example of an NCD distribution which violates Submodular Dominance and is additionally homogeneous, meaning it is distributed only on sets of the same size. Such distributions occur often enough to be of interest, e.g., distributions over the bases of a matroid.


\subsection{Applications}

Besides being a natural question, Submodular Dominance has several applications. 

\medskip
\textbf{Submodular Prophet Inequalities.}
The Prophet Inequality is a classical problem where a gambler sees the realizations of non-negative random variables one-by-one, choosing a random variable in an online fashion and attempting to maximize its value. The celebrated result of Krengel, Sucheston, and Garling~\cite{Krengel-Journal77,Krengel-Journal78} demonstrates a $\nicefrac{1}{2}$ \emph{prophet inequality}, meaning that just knowing the distributions in advance is enough to obtain $\nicefrac{1}{2}$ the expectation obtained by the \emph{prophet} that knows all the realizations in advance.

Motivated by applications to mechanism design, several works extended the $\nicefrac{1}{2}$ prophet inequality to gamblers selecting multiple random variables subject to a packing constraint to maximize a linear objective function, e.g.,~\cite{HKS-AAAI07,CHMS-STOC10,Alaei-FOCS11,KW-STOC12,Rubinstein-STOC16}. The \emph{Submodular Prophet Inequality} (SPI) was introduced by Rubinstein and Singla~\cite{RS-SODA17} as a further generalization to submodular objective functions to capture combinatorial applications.

One significant complication in SPI is that beyond simple Bernoulli settings, we deal with expectations that are no longer taken over product distributions. Chekuri and Livanos~\cite{CL21} obtain an efficient\footnote{We use efficient to mean algorithms that run in probabilistic polynomial time.} $c \cdot (1-e^{-b}) \cdot (e^{-b}-\epsilon)$ SPI for set systems with solvable polytopes\footnote{The polytope $\calP_\calI$ of a set system $\calI$ is formed by taking the convex hull of the indicator vectors of maximal independent sets in $\calI$, and is solvable if linear objective functions can be efficiently maximized over it.} and an efficient $(b, c)$-selectable greedy online contention resolution scheme (OCRS) for product distributions (see formal definitions in \Cref{sec:SPIApplication}). Crucially, their result loses a factor of $e^{-b}-\epsilon$ to handle the non-product distributions of SPI. We use Submodular Dominance to re-analyze the performance of greedy OCRSs in \Cref{sec:SPIImprovement}, which allows us to save this factor of $e^{-b}-\epsilon$ and improve the best known SPIs.

\begin{theorem}[restate=SPI,name=Submodular Prophet Inequalities] \label{thm:SPI}
    For fixed $\epsilon > 0$, if a set system $\calI \subseteq 2^U$ has a solvable polytope and an efficient $(b, c)$-selectable greedy OCRS for product distributions:
    \begin{itemize}[topsep=0pt,itemsep=0pt]
        \item There is an efficient $c \cdot (1-e^{-b}-\epsilon)$ SPI for monotone non-negative submodular functions.
        
        \item There is an efficient $\nicefrac{c}{4} \cdot (1-e^{-b}-\epsilon)$ SPI for general non-negative submodular functions.
    \end{itemize}
    Combining with known greedy OCRSs, this implies efficient SPIs as given in \Cref{table:SPINumbers}.
\end{theorem}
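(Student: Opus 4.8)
The plan is to run the standard relaxation-and-rounding template for submodular prophet inequalities, but to isolate the one place where Chekuri--Livanos lose a multiplicative $e^{-b}$ and replace it with a single application of Submodular Dominance. Fix a prophet instance: items $1,\dots,n$ arrive online, item $i$ reveals an independently drawn value $v_i$, and we must maintain a selection that is feasible in $\calI$ while maximizing the submodular objective of the collected values. First I would discretize each value distribution into finitely many buckets; by a routine argument this costs only a multiplicative $\epsilon$ and lets us pass to an \emph{expanded ground set} $\widehat U$ of (item, bucket) pairs carrying a natural (monotone) submodular lift $\widehat f$ of $f$. A configuration-style relaxation over the polytope that simultaneously encodes $\calI$ and the ``at most one realized value per item'' constraint upper-bounds the prophet's expectation $\OPT$ by $\max F$ over that polytope, where $F$ is the multilinear extension of $\widehat f$; running continuous greedy for ``time $b$'' (using that $\calP_\calI$ is solvable) produces a fractional $\mathbf x$ whose per-item marginals $x^{\mathrm{item}}_i = \sum_v x_{i,v}$ lie in $b\cdot\calP_\calI$ and with $F(\mathbf x)\ge(1-e^{-b}-\epsilon)\,\OPT$ for monotone $\widehat f$; for general $\widehat f$ one substitutes measured continuous greedy and the usual non-monotone bookkeeping, costing an extra factor $\tfrac14$.

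The online rounding is the obvious one: when item $i$ arrives with bucket $v$, mark $(i,v)$ \emph{active} with the conditional probability that makes $\Pr[(i,v)\text{ active}]=x_{i,v}$; feed the stream of active \emph{items} to the given $(b,c)$-selectable greedy OCRS for product distributions; and \emph{select} $(i,v)$ iff the OCRS accepts item $i$. This is a legitimate use of the OCRS because items are active independently of one another with marginals $\mathbf x^{\mathrm{item}}\in b\cdot\calP_\calI$ --- exactly the product regime for which the OCRS is built --- and because greedy OCRSs are monotone. Let $A\subseteq\widehat U$ be the (random) set of active pairs and $R\subseteq A$ the set of selected pairs, so that $\ALG=\E[\widehat f(R)]$.

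The crux is lower-bounding $\E[\widehat f(R)]$, and this is the step that Submodular Dominance rescues. Conditioning on $A$, the greedy OCRS keeps each pair of $A$ in $R$ with probability at least $c$ in the monotone way that the standard submodular analysis of (O)CRSs requires, so $\E[\widehat f(R)\mid A]\ge c\,\widehat f(A)$ for monotone $\widehat f$. Now the key structural observation: the law of $A$ is a product, over the $n$ items, of ``at most one of $\{(i,v)\}_v$ is present'' distributions; each per-item law is immediately WNR, and WNR is closed under independent products, so the law of $A$ is WNR with element-marginals $x_{i,v}$. By \Cref{thm:CompIneqSufficient} it satisfies Submodular Dominance, hence $\E_A[\widehat f(A)]\ge F(\mathbf x)$. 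Combining, $\ALG\ge c\,\E_A[\widehat f(A)]\ge c\,F(\mathbf x)\ge c\,(1-e^{-b}-\epsilon)\,\OPT$. For general non-negative submodular $\widehat f$ one runs the analogous argument through the standard non-monotone treatment of (O)CRSs --- in which Submodular Dominance is again what lets the correlated active/selected laws be compared with their product counterparts --- arriving at $\tfrac{c}{4}\,(1-e^{-b}-\epsilon)\,\OPT$. Efficiency is inherited from the solvability of $\calP_\calI$ (for the relaxation) and the hypothesized efficiency of the OCRS, and plugging the known $(b,c)$-selectable greedy OCRSs for the various set systems into these two bounds gives the quantitative table.

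I expect the main obstacle to be the bookkeeping around the expanded ground set and the configuration relaxation: making rigorous that the quantity the algorithm collects really is $\E[\widehat f(R)]$ for a bona fide submodular $\widehat f$, that the law of $A$ is exactly a product of per-item ``one-hot-or-empty'' laws (and hence WNR, which uses closure of WNR under independent products together with the trivial WNR-ness of the per-item laws), that the discretization error stays a clean multiplicative $\epsilon$, and that the non-monotone case loses only the stated factor $\tfrac14$. Conceptually, though, the entire improvement over Chekuri--Livanos is the single replacement above: where they effectively subsample each item's realized value down to an independent Bernoulli and pay $e^{-b}$ for the resulting product structure, we keep the correlated law of the active pairs and invoke Submodular Dominance, which --- the correlations being negative --- only ever helps.
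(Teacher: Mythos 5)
Your overall architecture matches the paper's: compute $\mathbf{x}\in b\cdot\calP''_\calI$ with $F(\mathbf{x})\ge(1-e^{-b}-\epsilon)\OPT$ via the Chekuri--Livanos relaxation, round online by activating element $ij$ with probability $x_{ij}$ and mirroring the item-space greedy OCRS, and use Submodular Dominance (via the observation that the active-element law is a product of one-hot-or-empty laws, each WNR, and WNR is closed under products) to get $\E[f(A)]\ge F(\mathbf{x})$. That last step is exactly the paper's use of \Cref{thm:CompIneqSufficient}.

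However, there is a genuine gap in the step connecting the OCRS output $R$ to the active set $A$. You assert that, conditioned on the realization of $A$, the greedy OCRS keeps each active pair with probability at least $c$ and hence $\E[\widehat f(R)\mid A]\ge c\,\widehat f(A)$ pointwise. Neither claim follows from $(b,c)$-selectability. Selectability is a statement about marginals averaged over the \emph{product} law of the active set ($\Pr[i\in\pi(S)]\ge c\,x_i$), not a guarantee conditional on the full realization $S=A$; and the passage from per-element selectability to an approximation for submodular objectives (\Cref{prop:OCRSApproxProduct}) is itself an expectation over the product distribution, not a pointwise bound. Indeed, if your conditional inequality held, one could take expectations against \emph{any} law of $A$ and conclude $\E[f(R)]\ge c\,\E[f(A)]$, which would make the product-distribution hypothesis in \Cref{prop:OCRSApproxProduct} vacuous --- precisely the difficulty the paper flags ("we cannot apply \Cref{prop:OCRSApproxProduct} even though the algorithm acts like a greedy OCRS"). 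The paper closes this gap with \Cref{lem:POSasProduct}: the product-of-singletons law of $A$ is an exact convex combination of genuine product distributions $\calD_{\mathbf{u}}$, all sharing the same item-marginals $\vec{x}\in b\cdot\calP_\calI$, so \Cref{prop:OCRSApproxProduct} applies legitimately to each component and the bound $\E[f(T_{\ALG})]\ge c\,\E_{S\sim\calD}[f(S)]$ follows by averaging. Your proposal omits this decomposition entirely, so as written the crux inequality $\ALG\ge c\,\E_A[\widehat f(A)]$ is unjustified; you would need either \Cref{lem:POSasProduct} or some other mechanism to reduce the correlated active law to the product regime before invoking the OCRS guarantee.
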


\begin{table}[ht]
    \begin{center}
    {\renewcommand{\arraystretch}{1.23} \normalfont
    \begin{tabular}{|C{5cm}|C{1.6cm}|C{1.6cm}|C{1.6cm}|C{1.6cm}|} \hline
        Feasibility Constraint                  & \multicolumn{2}{c|}{Prior Best~\cite{CL21}}
                                                & \multicolumn{2}{c|}{Our Results} \\ \cline{2-5}
                                                & Monotone & General  & Monotone & General  \\ \hline\hline
        Uniform Matroid of rank $k \rightarrow \infty$
                                                & $1/4.30$  & $1/17.20$ & $1-\nicefrac{1}{e}-\epsilon$
                                                                                    & $1/6.33$  \\ \hline
        Matroid                                 & $1/7.39$  & $1/29.54$ & $1/5.02$  & $1/20.07$ \\ \hline
        Matching                                & $1/9.49$  & $1/37.93$ & $1/6.75$  & $1/27.00$ \\ \hline
        Knapsack                                & $1/17.41$ & $1/69.64$ & $1/13.40$ & $1/53.60$ \\ \hline
    \end{tabular}
    }
    \end{center}
    \caption{Submodular Prophet Inequalities for different feasibility constraints. }
    \label{table:SPINumbers}
\end{table}

It is known that even for offline monotone submodular maximization over uniform matroids, no efficient algorithm can do better than a $(1-\nicefrac{1}{e})$-approximation~\cite{nemhauserWolsey78}. Thus, we obtain the first optimal efficient $1-\nicefrac{1}{e}-\epsilon$ monotone SPI over large rank uniform matroids.

\medskip
\textbf{Submodular Maximization.}
Another application is sampling from WNR distributions as a randomized rounding technique where the integral solution obtains at least the value of the fractional solution in expectation. A common method in submodular optimization is to first maximize the multilinear extension, which Vondr{\'{a}}k~\cite{Vondrak-STOC08} showed can be done for downward-closed set systems with solvable polytopes. For matroids, we know of methods which round the fractional solutions to sets without losing value \cite{CCPV07,CVZ-arXiv09,CVZ-FOCS10}, but set systems with solvable polytopes are far more general than matroids. Thus, the challenge in going beyond matroids is rounding the multilinear extension. By Submodular Dominance, an algorithm that efficiently generates a WNR distribution for a polytope automatically rounds the multilinear extension, which we show has immediate consequences for submodular maximization (details in \Cref{sec:SubmodMax}).

\begin{theorem}[restate=SubmodMax,name=Submodular Maximization] \label{thm:SubmodMax}
    Let $f : 2^U \rightarrow \mathbb{R}_{\geq 0}$ be a monotone submodular function. If a downward-closed set system $\calI \subseteq 2^U$ has a solvable polytope and efficiently admits WNR distributions, there exists an efficient algorithm that returns $T \in \calI$ such that $\E[f(T)] \geq (1-\nicefrac{1}{e}-o(1)) \cdot \max_{S \in \calI} f(S)$.
\end{theorem}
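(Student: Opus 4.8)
The argument is a short composition of two known tools with \Cref{thm:CompIneqSufficient}. Write $\OPT \coloneqq \max_{S \in \mathcal{I}} f(S)$ and let $\calP_\calI$ denote the (solvable) polytope of $\calI$. Since $f$ is monotone and $\calI$ is downward-closed, $\OPT$ is attained at some maximal independent set whose indicator vector lies in $\calP_\calI$, so $\max_{\mathbf{y} \in \calP_\calI} F(\mathbf{y}) \geq \OPT$. The first step is to run the continuous greedy algorithm of Vondr\'ak~\cite{Vondrak-STOC08}, which — because $\calP_\calI$ is down-closed and solvable and $f$ is monotone — efficiently returns a point $\mathbf{x} \in \calP_\calI$ with $F(\mathbf{x}) \geq (1 - \nicefrac{1}{e} - o(1)) \cdot \OPT$. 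The $o(1)$ here absorbs the discretization of the continuous-greedy trajectory and the error from estimating $F$ and its gradient by random sampling of the value oracle for $f$.

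The second step is to round $\mathbf{x}$ using the hypothesis that $\calI$ efficiently admits WNR distributions: I would hand $\mathbf{x}$ to that procedure to sample a random set $T$ from a WNR distribution $\calD$ that is supported on $\calI$ and has marginals exactly $\mathbf{x}$. Feasibility $T \in \calI$ is immediate from the support of $\calD$. Since $f$ is submodular and $\calD$ is WNR with marginals $\mathbf{x}$, \Cref{thm:CompIneqSufficient} (Submodular Dominance) gives
\[
    \E[f(T)] \;=\; \E_{T \sim \calD}[f(T)] \;\geq\; F(\mathbf{x}) \;\geq\; (1 - \nicefrac{1}{e} - o(1)) \cdot \OPT \enspace ,
\]
and the overall procedure runs in polynomial time, being the composition of two polynomial-time algorithms. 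This is exactly the claimed guarantee.

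The conceptual content is entirely in \Cref{thm:CompIneqSufficient}; the rest is bookkeeping, and the only place that needs care is the $o(1)$ accounting in the first step together with the interface to the sampler in the second. Concretely, the discretized continuous greedy may only return a point feasible after a $(1-\epsilon)$ shrink of $\calP_\calI$; one must check that the WNR sampler can be invoked on that (still down-feasible) point and that all such losses fold into a single $o(1)$ by letting the step size and $\epsilon$ vanish slowly with $n$. I expect this to be the main (and only mild) obstacle. It is also worth stating explicitly that the sampler must reproduce the marginals $\mathbf{x}$ exactly, since Submodular Dominance is applied to the true marginals of $\calD$ and, $F$ being monotone, passing to a coordinate-wise smaller marginal vector would only decrease $F$.
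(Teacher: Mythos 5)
Your proof is correct and follows exactly the paper's argument: compute $\mathbf{x} \in \calP_\calI$ via continuous greedy (\Cref{prop:ContGreedy}), sample $T$ from a WNR distribution over $\calI$ with marginals $\mathbf{x}$, and apply \Cref{thm:CompIneqSufficient} to get $\E[f(T)] \geq F(\mathbf{x}) \geq (1-\nicefrac{1}{e}-o(1)) \cdot \max_{S \in \calI} f(S)$. Your additional remarks on the $o(1)$ accounting and the sampler interface are reasonable bookkeeping that the paper leaves implicit.
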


\medskip
\textbf{Adaptivity Gaps for Stochastic Probing.}
A natural generalization of submodular maximization is by adding stochasticity: replace elements by random variables called items. Such problems are often known as Stochastic Probing~\cite{GN-IPCO13,AN15,GNS-SODA16,BSZ-Random19,EKM-COLT21}. In addition to knowing the distributions of the items, we also allow algorithms to learn the realization of an item after selecting it. This opens up the concept of adaptive algorithms, which modify their behavior conditioned on such realizations. Though adaptivity can result in better algorithms, it also introduces significant complexity; for example, a decision tree can be of exponential size. Therefore, non-adaptive algorithms may be preferable if their performance is comparable to that of the optimal adaptive algorithm, a concept known as the adaptivity gap. By sampling from WNR distributions to round the multilinear extension, we adapt the analysis of the adaptivity gap upper bound by Asadpour and Nazerzadeh~\cite{AN15} from matroids to any set system for which WNR distributions exist (details in \Cref{sec:StocProbing}).

\begin{theorem}[restate=StocProbing,name=Stochastic Probing] \label{thm:StocProbing}
    For a downward-closed set system $\calI$ that admits WNR distributions, the adaptivity gap for Stochastic Probing is upper-bounded by $\frac{e}{e-1}$.
\end{theorem}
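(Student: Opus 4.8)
The plan is to follow the matroid argument of Asadpour and Nazerzadeh~\cite{AN15}, substituting Submodular Dominance (\Cref{thm:CompIneqSufficient}) for their negatively‑correlated rounding inside the matroid polytope. Throughout, $f : 2^U \to \mathbb{R}_{\geq 0}$ is the monotone submodular objective, each item $e \in U$ is active independently with probability $p_e$, and $A \subseteq U$ denotes the random active set; I write $F(z) = \E_{S \sim z}[f(S)]$ for the multilinear extension, $z \cdot p$ for the coordinatewise product, and $f^+$ for the concave closure, i.e.\ the largest value of $\E_\calD[f]$ over distributions $\calD$ with the prescribed marginals. I will use the model in which the set of probed items must lie in $\calI$ (being downward‑closed, this makes every realized selection feasible as well).

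Fix the optimal adaptive policy $\pi^*$, and set $y_e := \Pr[e \text{ is probed by } \pi^*]$ and $x_e := \Pr[e \in \mathrm{output}(\pi^*)]$. Since the state of $e$ is revealed only after $\pi^*$ has committed to probing it, the events ``$e$ probed'' and ``$e$ active'' are independent, so $x_e = y_e\, p_e$; and since the probed set is always in $\calI$, we get $y \in \calP_\calI$. The output of $\pi^*$ is a random set with marginals $x$, hence $\OPT \le f^+(x)$, and combining with the standard correlation‑gap inequality $f^+(x) \le \tfrac{e}{e-1} F(x)$ for monotone submodular functions~\cite{CCPV-SICOMP11} gives
\[
   \OPT \;\le\; \tfrac{e}{e-1}\, F(x) \;=\; \tfrac{e}{e-1}\, F(y \cdot p) \enspace .
\]

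For the non‑adaptive side, use that $\calI$ admits WNR distributions together with $y \in \calP_\calI$ to fix a WNR distribution $\calD$ supported on $\calI$ with marginals $y$; the non‑adaptive policy samples $R \sim \calD$ and probes all of $R$, which is feasible since $R \in \calI$ and $\calI$ is downward‑closed. Define $g(T) := \E_A[f(T \cap A)]$. For each fixed $A$ the map $T \mapsto f(T \cap A)$ is submodular (it is $f$ precomposed with intersection by $A$, which commutes with $\cup$ and $\cap$), so $g$ is submodular as an average of submodular functions, and its multilinear extension satisfies $G(y) = \E_{R \sim y}\big[\E_A[f(R \cap A)]\big] = \E_{S \sim y \cdot p}[f(S)] = F(y \cdot p)$, since independently thinning the product distribution with marginals $y$ by $p$ yields the product distribution with marginals $y \cdot p$. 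Applying Submodular Dominance (\Cref{thm:CompIneqSufficient}) to $g$ and $\calD$,
\[
   \ALG \;=\; \E_{R \sim \calD}\big[g(R)\big] \;\ge\; G(y) \;=\; F(y \cdot p) \enspace ,
\]
where $\ALG$ denotes the value of the non‑adaptive policy. Chaining the two displays yields $\OPT \le \tfrac{e}{e-1}\, \ALG$, so the adaptivity gap is at most $\tfrac{e}{e-1}$.

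The individual steps are short; the real content is recognizing that Submodular Dominance is precisely what replaces the matroid‑specific rounding of~\cite{AN15} by sampling from a WNR distribution over an arbitrary set system, together with the observation that $g(T) = \E_A[f(T \cap A)]$ stays submodular so that \Cref{thm:CompIneqSufficient} applies to it. The point needing the most care is the model‑level claim that the probe‑marginals of any feasible adaptive policy lie in $\calP_\calI$ (so that a WNR distribution with those marginals is available to sample); if the precise model additionally imposes a separate downward‑closed constraint on the selected set, it is met for free since $R \cap A \subseteq R \in \calI$, and if it instead constrains only the selected set one must argue $x \in \calP_\calI$ and invoke closure of WNR under independent thinning in place of the function $g$.
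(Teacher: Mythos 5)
Your proof is correct and follows essentially the same route as the paper: define the auxiliary expected-value function $g$ (the paper's $f'$), bound the adaptive optimum by $\tfrac{e}{e-1}$ times its multilinear extension at a point of $\calP_\calI$, and round by sampling a WNR distribution with those marginals, invoking Submodular Dominance (\Cref{thm:CompIneqSufficient}). The only difference is that you re-derive the $(1-\nicefrac{1}{e})$-approximation of the adaptive optimum explicitly via the concave closure and the correlation gap, whereas the paper simply cites Asadpour--Nazerzadeh~\cite{AN15} for that portion of the analysis.
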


\medskip
\textbf{Contention Resolution Schemes.}
Contention resolution schemes (CRS) are another randomized rounding technique, with the concept being formally introduced by \cite{CVZ-SICOMP14} for submodular maximization. (Similar but less thoroughly explored notions appear in earlier works such as \cite{BKNS12}.) Since submodular maximization usually occurs via approximations of the multilinear extension, CRSs have generally been studied with respect to product distributions. Recently, Dughmi~\cite{Dughmi-ICALP20,Dughmi-arXiv21} initiated the study of CRSs for non-product distributions because of their applications in settings such as the Matroid Secretary Problem. We extend the CRS of \cite{CVZ-SICOMP14} for matroids from product distributions to WNR distributions, which gives possible directions to generalize our understanding of CRSs (details in \Cref{sec:CRS}).

\begin{theorem}[restate=CRS,name=Contention Resolution Schemes] \label{thm:CRS}
    For a matroid $\calM$, there exists a $(1-\nicefrac{1}{e})$-selectable CRS for any WNR distribution with marginals $\mathbf{x} \in \calP_\calM$.
\end{theorem}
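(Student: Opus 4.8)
The plan is to leave the contention resolution scheme of Chekuri--Vondr\'ak--Zenklusen essentially untouched and to re-examine the one inequality in their analysis that exploits independence. Recall that, for a fixed realization $R \subseteq U$, the achievable vectors of selection marginals $\bigl(\Pr_\pi[i \in \pi(R)]\bigr)_{i}$ over all randomized maps $\pi$ into independent subsets of $R$ form exactly the matroid independence polytope $\calP_{\calM|R}$ of the restriction of $\calM$ to $R$, which is down-closed. Hence a $c$-selectable CRS for a distribution $\mu$ with marginals $\mathbf{x}$ exists if and only if $c\mathbf{x}$ is dominated by a point of the down-closed convex set $\E_{R\sim\mu}\bigl[\calP_{\calM|R}\bigr]$, and by the support-function description of down-closed convex sets this is equivalent to
\[
    \E_{R \sim \mu}\bigl[\, r_{\calM,\mathbf{y}}(R) \,\bigr] \;\geq\; c \cdot \langle \mathbf{y}, \mathbf{x}\rangle \qquad \text{for every } \mathbf{y} \in \mathbb{R}^U_{\geq 0},
\]
where $r_{\calM,\mathbf{y}}(R) \coloneqq \max\{\, \mathbf{y}(I) : I \subseteq R,\ I \text{ independent in } \calM \,\}$ is the maximum $\mathbf{y}$-weight of an independent subset of $R$. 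For $\mu$ a product distribution this is exactly the existence characterization underlying~\cite{CVZ-SICOMP14}, and the derivation does not use the product structure, so the same statement holds for an arbitrary distribution $\mu$.

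The key observation is that for each fixed $\mathbf{y} \geq \mathbf{0}$ the map $R \mapsto r_{\calM,\mathbf{y}}(R)$ is a monotone \emph{submodular} function of $R$, namely a weighted matroid rank function. Therefore \Cref{thm:CompIneqSufficient} (Submodular Dominance for WNR distributions) applies to our WNR distribution $\calD$ and yields
\[
    \E_{R \sim \calD}\bigl[\, r_{\calM,\mathbf{y}}(R) \,\bigr] \;\geq\; F_{r_{\calM,\mathbf{y}}}(\mathbf{x}) \;=\; \E_{R \sim \mathbf{x}}\bigl[\, r_{\calM,\mathbf{y}}(R) \,\bigr],
\]
so $\calD$ does at least as well as the product distribution with the same marginals on the relevant quantity. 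It remains to invoke the product case: since~\cite{CVZ-SICOMP14} construct a $(1-\nicefrac{1}{e})$-selectable CRS for the product distribution with marginals $\mathbf{x} \in \calP_\calM$, the characterization above (applied to $\mu$ equal to that product distribution) gives $\E_{R\sim\mathbf{x}}[r_{\calM,\mathbf{y}}(R)] \geq (1-\nicefrac{1}{e})\langle\mathbf{y},\mathbf{x}\rangle$ for all $\mathbf{y}\ge\mathbf{0}$ (equivalently, one can quote their direct estimate $\E_{R\sim\mathbf{x}}[r_{\calM,\mathbf{y}}(R)] \geq (1-\nicefrac1e)\,\mathbf{y}(\mathbf{x})$ for $\mathbf{x}\in\calP_\calM$). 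Chaining the last two displays gives $\E_{R\sim\calD}[r_{\calM,\mathbf{y}}(R)] \geq (1-\nicefrac1e)\langle\mathbf{y},\mathbf{x}\rangle$ for every $\mathbf{y}\ge\mathbf{0}$, and the characterization, now applied to $\mu = \calD$, produces the desired $(1-\nicefrac1e)$-selectable CRS.

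I expect the bulk of the work to lie in the first paragraph: carefully stating and proving the separating-hyperplane / LP-duality characterization of CRS existence for an \emph{arbitrary} distribution (the literature typically records it only in the product case), including the facts that $\calP_{\calM|R}$ is precisely the down-closed independence polytope of $\calM|R$ and that taking the expectation over $R$ commutes with the linear maximization defining the support function. Once that characterization is in hand the rest is immediate, since the only way the distribution enters is through $\E[r_{\calM,\mathbf{y}}(\cdot)]$, an expectation of a submodular function, which Submodular Dominance controls. The only other thing to check is the (standard) fact that $r_{\calM,\mathbf{y}}$ is submodular for every nonnegative $\mathbf{y}$.
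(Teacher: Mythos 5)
Your proposal is correct and takes essentially the same route as the paper: the paper black-boxes the LP-duality/support-function characterization by citing Dughmi's reduction (\Cref{prop:CRSCondition}) and combines it with \Cref{thm:CompIneqSufficient}, whereas you re-derive that characterization and apply Submodular Dominance only to the weighted matroid rank functions $r_{\calM,\mathbf{y}}$ that it produces. The substance is identical, so no further comparison is needed.
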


\section{WNR and Other Negatively Dependent Distributions} \label{sec:WNR}

In this section, we first discuss  popular notions of negative dependence, and then introduce WNR and study its various  properties. Lengthier proofs are deferred to \Cref{sec:WNRProofs}.

\begin{definition}[NA] \label{def:NA}
    A distribution $\calD$ over $2^U$ satisfies \emph{Negative Association} (NA) if for any monotone $f, g : 2^U \rightarrow \mathbb{R}$ depending on disjoint sets of elements, $\Cov_{S \sim \calD}[f(S), g(S)] \leq 0$.
\end{definition}

This property is very similar to the Positive Association (PA) condition in the FKG inequality, with the main difference being the reversed inequality and disjoint sets. Although the FKG Inequality gives a straightforward condition to check for PA, no analogous result exists for NA, and in general, it is difficult to prove that a distribution is NA~\cite{JP83,Pemantle99}.

Another way to define negative dependence is based on the idea that conditioning on ``larger'' inclusion events should reduce the probability of other inclusion events.

\begin{definition}[NR] \label{def:NR}
    A distribution $\calD$ over $2^U$ satisfies \emph{Negative Regression} (NR) if for any sets $R_-, R_+, T \subseteq U$ such that $R_- \subseteq R_+ \subseteq T$ and any monotone function $f : 2^U \rightarrow \mathbb{R}$,
    \[
        \E_{S \sim \calD}[f(S \setminus T) \given (S \cap T) = R_+] \quad \leq \quad \E_{S \sim \calD}[f(S \setminus T) \given (S \cap T) = R_-] \enspace .
    \]
\end{definition}

Equivalently, $\calD$ is NR if $S \setminus T$ conditioned on $S \cap T = R_-$ stochastically dominates $S \setminus T$ conditioned on $S \cap T = R_+$ for all $R_- \subsetneq R_+ \subseteq T \subseteq U$. It turns out that NR is also a difficult property to check.

Since NA and NR are both natural forms of negative dependence, it is surprising that the exact relationship between them is unknown. While it is known that NA does not imply NR, it is conjectured that NR implies NA~\cite{Pemantle99}. One might then ask whether we can generalize NA and NR to get the best of both worlds: a weaker notion of negative dependence that is easier to check while still satisfying many desirable properties. This is our motivations for defining WNR distributions.

We first reformulate the WNR condition in terms of covariance.

\begin{claim} \label{claim:WNRCovariance}
    The WNR condition \eqref{eq:WNRCondition} is equivalent to $\Cov_{S \sim \calD}[f(S \setminus i), \mathbbm{1}_{i \in S}] \leq 0$.
\end{claim}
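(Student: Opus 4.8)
The plan is to expand the covariance directly by conditioning on the event $\{i \in S\}$ via the law of total expectation. Write $p \coloneqq \Pr_{S \sim \calD}[i \in S]$, and abbreviate $A \coloneqq \E_{S \sim \calD}[f(S \setminus i) \given i \in S]$ and $B \coloneqq \E_{S \sim \calD}[f(S \setminus i) \given i \not\in S]$, so that condition~\eqref{eq:WNRCondition} is exactly the inequality $A \leq B$. First I would compute the three ingredients of the covariance: $\E[\mathbbm{1}_{i \in S}] = p$; next, $\E[f(S \setminus i) \cdot \mathbbm{1}_{i \in S}] = p A$, since the indicator zeroes out the $i \not\in S$ branch; and finally $\E[f(S \setminus i)] = p A + (1-p) B$ by total expectation.

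A one-line substitution then gives
\[
    \Cov_{S \sim \calD}[f(S \setminus i), \mathbbm{1}_{i \in S}] \;=\; pA - p\bigl(pA + (1-p)B\bigr) \;=\; p(1-p)\,(A - B) \enspace .
\]
Since $p(1-p) \geq 0$, the sign of the covariance equals the sign of $A - B$; hence $\Cov_{S \sim \calD}[f(S \setminus i), \mathbbm{1}_{i \in S}] \leq 0$ if and only if $A \leq B$, which is precisely~\eqref{eq:WNRCondition}.

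The only point requiring a word of care — and the closest thing to an obstacle, though it is a minor one — is the degenerate case $p \in \{0,1\}$, where one of the conditional expectations in~\eqref{eq:WNRCondition} conditions on a probability-zero event and is thus vacuous. In that case $\mathbbm{1}_{i \in S}$ is almost surely constant, so the covariance is identically $0$ and both formulations agree trivially. For $0 < p < 1$ the equivalence is the exact algebraic identity above; note that no monotonicity of $f$ is needed for the reformulation itself, as monotonicity only enters because \Cref{def:WNR} quantifies~\eqref{eq:WNRCondition} over monotone $f$.
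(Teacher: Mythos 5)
Your proof is correct and follows essentially the same route as the paper's: expand the covariance by conditioning on $\{i \in S\}$ and observe that it reduces to comparing the two conditional expectations. Your explicit identity $\Cov = p(1-p)(A-B)$ and the remark about the degenerate case $p \in \{0,1\}$ are slightly more careful than the paper's version, which cancels $x_i$ without comment, but the underlying argument is identical.
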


\begin{proof}
Let $\calD$ have marginals $\mathbf{x}$. The covariance inequality is equivalent to $\E_{S \sim \calD}[f(S \setminus i) \cdot \mathbbm{1}_{i \in S}] \leq \E_{S \sim \calD}[f(S \setminus i)] \E_{S \sim \calD}[\mathbbm{1}_{i \in S}]$. Then observe that if we expand LHS by conditioning on $i$, the expectation conditioned on $i \not\in S$ is $0$ due to the indicator. We can also simplify RHS using $\E_{S \sim \calD}[\mathbbm{1}_{i \in S}] = x_i$, so the covariance inequality is equivalent to
\[
    x_i \cdot \E_{S \sim \calD}[f(S \setminus i) \cdot \mathbbm{1}_{i \in S} \given i \in S] \quad \leq \quad \E_{S \sim \calD}[f(S \setminus i)] \cdot x_i \enspace .
\]
Canceling the $x_i$ and noting that the indicator on LHS is always $1$, we have $\E_{S \sim \calD}[f(S \setminus i) \given i \in S] \leq \E_{S \sim \calD}[f(S \setminus i)]$. This is equivalent to the WNR condition since the expectation is just a weighted sum of the conditional expectation on $i \in S$ and $i \not\in S$.
\end{proof}

Next, we prove that WNR distributions generalize NA and NR distributions.
\begin{proposition}[restate=WNRisSuperset,name=]
    NA and NR imply WNR, and WNR implies NCD, but the reverse implications do not hold. In other words, the union of NA and NR distributions is a strict subset of WNR distributions, which is a strict subset of NCD distributions.
\end{proposition}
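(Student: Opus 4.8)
The plan is to prove the two forward implications straight from the definitions (using the covariance form of WNR, \Cref{claim:WNRCovariance}) and then to certify strictness with two explicit small distributions.

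\emph{The inclusions.} For $\mathrm{NR}\Rightarrow\mathrm{WNR}$, specialize \Cref{def:NR} to $T=\{i\}$, $R_+=\{i\}$, $R_-=\emptyset$: this is exactly \eqref{eq:WNRCondition}. For $\mathrm{NA}\Rightarrow\mathrm{WNR}$, use \Cref{claim:WNRCovariance}: it suffices that $\Cov_{S\sim\calD}[f(S\setminus i),\mathbbm{1}_{i\in S}]\le0$ for monotone $f$, and $S\mapsto f(S\setminus i)$ is monotone (since $S\subseteq T$ gives $S\setminus i\subseteq T\setminus i$) and depends only on $U\setminus i$, while $S\mapsto\mathbbm{1}_{i\in S}$ is monotone and depends only on $\{i\}$, so \Cref{def:NA} applies to these disjoint blocks. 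For $\mathrm{WNR}\Rightarrow\mathrm{NCD}$ I would induct on $|T|$, the case $|T|\le1$ being an equality; writing $T=T'\cup\{j\}$ with $j\notin T'$, apply \Cref{claim:WNRCovariance} at element $j$ to the monotone function $f=\mathbbm{1}_{T'\subseteq S}$ (note $f(S\setminus j)=f(S)$) to get $\Pr[T\subseteq S]=\Pr[T'\subseteq S,\,j\in S]\le\Pr[T'\subseteq S]\,x_j$ and induct, and repeat with $f=\mathbbm{1}_{T'\cap S\ne\emptyset}=1-\mathbbm{1}_{T'\subseteq S^c}$, which is monotone as a constant minus a decreasing indicator, to obtain $\Cov[\mathbbm{1}_{T'\cap S=\emptyset},\mathbbm{1}_{j\notin S}]\le0$, i.e.\ $\Pr[T\subseteq S^c]\le\Pr[T'\subseteq S^c](1-x_j)$.

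\emph{$\mathrm{WNR}\subsetneq\mathrm{NCD}$.} I would exhibit an NCD distribution that is not WNR. On $U=\{1,2,3\}$ take $\Pr[\{1\}]=\Pr[\{2\}]=\tfrac14$, $\Pr[\{1,3\}]=\Pr[\{2,3\}]=\tfrac3{20}$, $\Pr[\{3\}]=\Pr[\{1,2,3\}]=\tfrac1{10}$, $\Pr[\emptyset]=\Pr[\{1,2\}]=0$; all marginals equal $\tfrac12$, a short check gives every inequality of \Cref{def:NCD}, but $\Cov[\mathbbm{1}_{\{1,2\}\subseteq S},\mathbbm{1}_{3\in S}]=\tfrac1{10}-\tfrac1{10}\cdot\tfrac12>0$ contradicts \Cref{claim:WNRCovariance} with $f=\mathbbm{1}_{\{1,2\}\subseteq S}$ and $i=3$. (Alternatively, the homogeneous NCD distribution violating Submodular Dominance built later in the paper cannot be WNR by \Cref{thm:CompIneqSufficient}.)

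\emph{$(\mathrm{NA}\cup\mathrm{NR})\subsetneq\mathrm{WNR}$.} I would use the distribution on $U=\{1,2,3,4\}$ invariant under $1\leftrightarrow2$, $3\leftrightarrow4$, and swapping the two pairs, with $\Pr[S=\emptyset]$, each $\Pr[S=\{i\}]$, $\Pr[S=\{1,2\}]=\Pr[S=\{3,4\}]$, each $\Pr[S=(\text{triple})]$, and $\Pr[S=U]$ all equal to $\tfrac1{20}$, and each of the four crossing pairs $\{1,3\},\{1,4\},\{2,3\},\{2,4\}$ equal to $\tfrac1{10}$; all marginals are $\tfrac12$. To see it is WNR, by the symmetry it suffices to verify the stochastic-dominance form of \Cref{def:WNR} for $i=1$ — that the law of $S\setminus1$ given $1\notin S$ dominates that given $1\in S$ on the Boolean lattice $2^{\{2,3,4\}}$ — which reduces to one short inequality per up-set, all satisfied by these parameters. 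It is not NA, since $f=\mathbbm{1}_{\{1,2\}\subseteq S}$ and $g=\mathbbm{1}_{\{3,4\}\subseteq S}$ have disjoint supports but $\Cov[f,g]=\Pr[S=U]-\Pr[\{1,2\}\subseteq S]\,\Pr[\{3,4\}\subseteq S]=\tfrac1{20}-(\tfrac15)^2>0$; it is not NR, since \Cref{def:NR} with $T=\{1,2\}$, $R_-=\{1\}$, $R_+=\{1,2\}$, $f=\mathbbm{1}_{\{3,4\}\subseteq S}$ demands $\Pr[\{3,4\}\subseteq S\mid\{1,2\}\subseteq S]\le\Pr[\{3,4\}\subseteq S\mid S\cap\{1,2\}=\{1\}]$, while the left side is $\tfrac14$ and the right side is $\tfrac16$.

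\emph{Main obstacle.} Everything but the last item is routine. The obstacle is the $(\mathrm{NA}\cup\mathrm{NR})\subsetneq\mathrm{WNR}$ construction: on three elements WNR already forces NA (for monotone $f,g$ on disjoint blocks, one of the two functions, viewed as a function of $S$, has the form $\mathbbm{1}_{i\in S}$ or reduces to it, so $\Cov[f,g]\le0$ follows from \Cref{claim:WNRCovariance} in one of its two variants), so one genuinely needs four elements and must exploit the single surviving configuration, a monotone function of $\{1,2\}$ against one of $\{3,4\}$. The tension is that the very WNR inequalities that annihilate all single-coordinate correlations also tightly constrain $\Pr[S=U]$ and the pair-probabilities, so the positive covariance has to be squeezed into a narrow window — here by holding $\Pr[S=U]$ just above $\Pr[\{1,2\}\subseteq S]^2$ while all up-set dominance inequalities stay valid — and finding and verifying such parameters is the real work.
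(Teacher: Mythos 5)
Your proposal is correct, and all three counterexample computations check out (the $3$-element distribution is NCD with $\Cov[\mathbbm{1}_{\{1,2\}\subseteq S},\mathbbm{1}_{3\in S}]=\nicefrac{1}{20}>0$, and for the $4$-element distribution the conditional law of $S\setminus 1$ given $1\notin S$ is obtained from the one given $1\in S$ by moving mass $\nicefrac{1}{10}$ from $\{3\}$ up to $\{2,3\}$ and from $\{4\}$ up to $\{2,4\}$, so stochastic dominance holds and symmetry finishes the job). Your route differs from the paper's in two places. First, for $\mathrm{WNR}\Rightarrow\mathrm{NCD}$ you give a direct induction on $|T|$ using \Cref{claim:WNRCovariance} with the indicator functions $\mathbbm{1}_{T'\subseteq S}$ and $1-\mathbbm{1}_{T'\subseteq S^c}$; the paper instead routes this through its main results, $\mathrm{WNR}\Rightarrow$ Submodular Dominance (\Cref{thm:CompIneqSufficient}) $\Rightarrow\mathrm{NCD}$ (\Cref{thm:CompIneqNecessary}). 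Your argument is more elementary and self-contained, while the paper's derivation comes for free once the two theorems are in place and simultaneously locates Submodular Dominance strictly between WNR and NCD in the hierarchy. Second, for strictness you build fresh explicit distributions, whereas the paper certifies $\mathrm{WNR}\subsetneq\mathrm{NCD}$ via its NCD distributions that violate Submodular Dominance (as you note parenthetically) and certifies $(\mathrm{NA}\cup\mathrm{NR})\subsetneq\mathrm{WNR}$ with the Joag-Dev--Proschan four-variable distribution; the paper's WNR verification for that table is the same mass-transfer/stochastic-dominance argument you use, so the two constructions are interchangeable in spirit. Your observation that three elements cannot suffice for the last separation (since any two disjoint blocks then include a singleton, forcing the relevant NA covariances via \Cref{claim:WNRCovariance}) is a nice addition not made explicit in the paper.
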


\begin{proof}
The WNR condition is a special case of the NR condition when $T \coloneqq \{i\}$, and by \Cref{claim:WNRCovariance}, it is also a special case of the NA condition when $g(S) \coloneqq \mathbbm{1}_{i \in S}$, so both NA and NR imply WNR. For strict containment, we give a WNR distribution that is neither NA nor NR in \Cref{sec:WNRProofs}. \Cref{thm:CompIneqSufficient,thm:CompIneqNecessary} and the example distributions in \Cref{sec:CompIneq} demonstrate that WNR distributions are a strict subset of NCD distributions.
\end{proof}

\begin{figure}[ht]
\begin{center}
\begin{tikzpicture}[thin,scale=0.6]
	\draw [fill=blue!10]   (1.5,0) ellipse (6.7cm and 3.5cm);
	\draw [fill=green!14]  (1.5,-0.4) ellipse (5.8cm and 2.7cm);
 	\draw [fill=yellow!18] (1.5,-0.8) ellipse (4.9cm and 1.9cm);
 	\draw [fill=red!26]    (2.75,-1.2) ellipse (2.75cm and 1cm);
	\draw [fill=red!26]    (0.25,-1.2) ellipse (2.75cm and 1cm);

	\draw (0.25,-1.2) ellipse (2.75cm and 1cm);
	\draw (2.75,-1.2) ellipse (2.75cm and 1cm);

 	\node at (-0.8,-2.2)[label={[align=center]Negative \\ Association}] {};
 	\node at (3.8,-2.3)[label={[align=center]Negative \\ Regression}] {};
 	\node at (1.5,1.9)[label=Negative Cylinder Dependence] {};		
 	\node at (1.5,0.9)[label=Submodular Dominance] {};		
 	\node at (1.5,-0.5)[label=Weak Negative Regression] {};		
\end{tikzpicture}
\end{center}
\caption{Hierarchy of negative dependence and its relation to Submodular Dominance.}
\label{fig:PackingConstrHierarchy}
\end{figure}
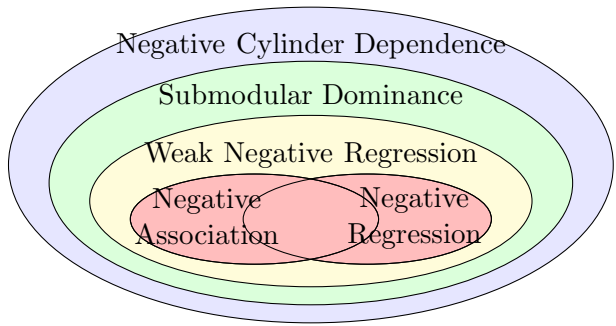

Finally, we observe that WNR satisfies two closure properties, proved in \Cref{sec:WNRProofs}.

\begin{definition}[Projection]
    Let $\calD$ be a distribution over $2^U$. Its \emph{projection} onto $U' \subseteq U$ is the distribution which samples $S \sim \calD$ and returns $S \cap U'$.
\end{definition}

\begin{definition}[Products]
    Let $\calA$ and $\calB$ be distributions over $2^A$ and $2^B$ for disjoint $A, B$. Their \emph{product}  distribution  independently samples $S \sim \calA$ and $T \sim \calB$, then returns $S \cup T$.
\end{definition}

\begin{proposition}[restate=WNRProperties,name=] \label{prop:WNRProperties}
    WNR is closed both under projection and  under products.
\end{proposition}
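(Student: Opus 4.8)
The plan is to verify the WNR condition \eqref{eq:WNRCondition} for each of the two derived distributions by reducing it to the WNR condition of the distribution(s) we started from. The key observation in both cases is that conditioning on the inclusion status of a fixed element $i$ interacts cleanly with the projection/product structure, and that ``pre-composing'' a monotone function with a monotone set operation (intersection with a fixed set, or union with a fixed set) again yields a monotone function. One could route the whole argument through the covariance reformulation of \Cref{claim:WNRCovariance}, but the direct argument below seems cleanest.

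For projection, let $\calD$ be WNR over $2^U$ and let $\calD'$ be its projection onto $U' \subseteq U$, so a sample is $S' = S \cap U'$ for $S \sim \calD$. Fix $i \in U'$ and a monotone $f : 2^{U'} \to \mathbb{R}$. First I would define $g : 2^U \to \mathbb{R}$ by $g(S) \coloneqq f(S \cap U')$ and note that it is monotone, since $S \mapsto S \cap U'$ is a monotone map on the subset lattice. Because $i \in U'$, we have $i \in S' \iff i \in S$ and $g(S \setminus i) = f((S \cap U') \setminus i) = f(S' \setminus i)$, so the two conditional expectations appearing in \eqref{eq:WNRCondition} for $(\calD', f, i)$ coincide with those for $(\calD, g, i)$. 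The desired inequality is then immediate from WNR of $\calD$ applied to $g$ and $i$.

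For products, let $\calA, \calB$ be WNR over $2^A, 2^B$ with $A \cap B = \emptyset$, and let $\calC$ be their product, so a sample is $R = S \cup T$ with $S \sim \calA$ and $T \sim \calB$ independent. Fix $i \in A \cup B$; by symmetry assume $i \in A$, and fix a monotone $f : 2^{A \cup B} \to \mathbb{R}$. Since $S$ and $T$ are disjoint and $T \subseteq B$, we have $i \in R \iff i \in S$ (conditioning on which is independent of $T$), and $R \setminus i = (S \setminus i) \cup T$. The plan is to condition first on $T$: for each fixed $T$ define the monotone function $f_T : 2^A \to \mathbb{R}$ by $f_T(S) \coloneqq f(S \cup T)$, apply WNR of $\calA$ to obtain $\E_{S \sim \calA}[f_T(S \setminus i) \given i \in S] \leq \E_{S \sim \calA}[f_T(S \setminus i) \given i \not\in S]$, and then take expectation over $T \sim \calB$ to recover \eqref{eq:WNRCondition} for $\calC$, using $f_T(S \setminus i) = f((S \setminus i) \cup T)$.

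Neither step poses a genuine obstacle; the only points requiring a bit of care are (i) checking that $g$ and $f_T$ are monotone, which follows because $S \mapsto S \cap U'$ and $S \mapsto S \cup T$ are monotone maps on the subset lattice, and (ii) justifying that the relevant conditioning factors through the projection/product, for which $i \in U'$ (respectively, the disjointness of $A$ and $B$) is exactly what is needed.
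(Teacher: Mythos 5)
Your proposal is correct and follows essentially the same route as the paper: projection is handled by noting that $S \mapsto f(S \cap U')$ is still monotone, and products are handled by conditioning on the $\calB$-part $T$, applying WNR of $\calA$ to the monotone function $S \mapsto f(S \cup T)$, and averaging over $T$ (using independence to justify that the conditioning on $i \in S$ does not affect the law of $T$). No gaps.
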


Joag-Dev and Proschan~\cite{JP83} showed that NA distributions are NCD, closed under projection, closed under products, and closed under taking monotone functions of disjoint subsets of variables.\footnote{This last property is useful in the setting of continuous random variables studied in~\cite{JP83} because properties closed under convolutions are extremely powerful. However, it is not so relevant in the discrete settings we study.} Since WNR shares three of these properties with NA and requires a weaker condition while also generalizing NR, it appears to be a useful notion of negative dependence.

\section{Towards a Characterization of Submodular Dominance} \label{sec:CompIneq}


\subsection{WNR is a Sufficient Condition}

\CompIneqSufficient*

\begin{proof}
We prove by induction on the number of elements. The base case of $1$ element is trivial because the marginals fully specify the distribution.

We will show that any WNR distribution $\calD$ over $2^{[k]}$ with marginals $\mathbf{x}$ satisfies Submodular Dominance, assuming by induction that all WNR distributions over $2^{[k-1]}$ satisfy Submodular Dominance. We assume that $x_k \ne 0, 1$ because otherwise, we can interpret $\calD$ as a distribution over $2^{[k-1]}$ and trivially be done.

Let $\calD \setminus k$ and $\mathbf{x} \setminus k$ denote the projections of $\calD$ and $\mathbf{x}$ onto $[k-1]$. Let $\calD_k$ denote the distribution which samples $S \sim \calD \setminus k$, then returns $S \cup k$ w.p. $x_k$ and returns $S$ otherwise, i.e., $\calD_k$ is $\calD$ but with element $k$ sampled independently.

Let $f : 2^{[k]} \rightarrow \mathbb{R}$ be a submodular function. To prove Submodular Dominance, we will show the following inequalities hold:
\begin{equation} \label{eq:CompIneqSteps}
    \E_{S \sim \calD}[f(S)] \quad \stackrel{\text{Claim~\ref{claim:CompIneqStep2}}}{\geq} \quad \E_{S \sim \calD_k}[f(S)] \quad \stackrel{\text{Claim~\ref{claim:CompIneqStep1}}}{\geq} \quad \E_{S \sim \mathbf{x}}[f(S)] \enspace .
\end{equation}

\begin{claim} \label{claim:CompIneqStep1}
    The second inequality of \eqref{eq:CompIneqSteps} holds, i.e., $\E_{S \sim \calD_k}[f(S)] \geq \E_{S \sim \mathbf{x}}[f(S)]$.
\end{claim}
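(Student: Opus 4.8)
$\E_{S \sim \calD_k}[f(S)] \geq \E_{S \sim \mathbf{x}}[f(S)]$.

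The plan is to collapse the claim to the inductive hypothesis by absorbing the (now independent) coordinate $k$ into the objective function. First I would condition on whether $k \in S$. In $\calD_k$ the element $k$ is present with probability $x_k$ independently of $[k-1]$, and in either case the trace of $\calD_k$ on $[k-1]$ is exactly $\calD \setminus k$; likewise $k$ is independent under the product distribution $\mathbf{x}$. Hence
\[
  \E_{S \sim \calD_k}[f(S)] = x_k \E_{S' \sim \calD \setminus k}[f(S' \cup k)] + (1-x_k)\E_{S' \sim \calD \setminus k}[f(S')],
\]
\[
  \E_{S \sim \mathbf{x}}[f(S)] = x_k \E_{S' \sim \mathbf{x} \setminus k}[f(S' \cup k)] + (1-x_k)\E_{S' \sim \mathbf{x} \setminus k}[f(S')].
\]

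This motivates defining $g : 2^{[k-1]} \to \mathbb{R}$ by $g(S') \coloneqq x_k f(S' \cup k) + (1-x_k) f(S')$, which rewrites the two identities as $\E_{S \sim \calD_k}[f(S)] = \E_{S' \sim \calD \setminus k}[g(S')]$ and $\E_{S \sim \mathbf{x}}[f(S)] = \E_{S' \sim \mathbf{x} \setminus k}[g(S')]$; the latter is exactly the multilinear extension of $g$ evaluated at $\mathbf{x} \setminus k$. I would then check that $g$ is submodular on $2^{[k-1]}$: the map $S' \mapsto f(S')$ is submodular, the map $S' \mapsto f(S' \cup k)$ is submodular (restricting a submodular function to the sublattice of sets containing $k$ preserves submodularity), and a convex combination of submodular functions is submodular.

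Finally I would invoke the two ingredients already available: by \Cref{prop:WNRProperties} the projection $\calD \setminus k$ of the WNR distribution $\calD$ is WNR over $2^{[k-1]}$, so by the inductive hypothesis it satisfies Submodular Dominance; applying this with the submodular function $g$ gives $\E_{S' \sim \calD \setminus k}[g(S')] \geq \E_{S' \sim \mathbf{x} \setminus k}[g(S')]$, which is precisely the claim after substituting back the two identities. I do not expect a genuine obstacle here — the entire content is recognizing the surrogate $g$ and the reduction in ground set — but two points deserve a line of care: verifying that $g$ is submodular, and using the projection-closure property so that $\calD \setminus k$ remains inside the WNR class, without which the inductive hypothesis would not be applicable.
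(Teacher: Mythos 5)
Your proposal is correct and follows essentially the same route as the paper: decompose both expectations over the independent coordinate $k$, absorb $k$ into the surrogate submodular function $g(S') = x_k f(S' \cup k) + (1-x_k) f(S')$, and apply the inductive hypothesis to the projected distribution $\calD \setminus k$, which is WNR by closure under projection. The only difference is that you spell out the submodularity of $g$ slightly more explicitly than the paper does.
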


\begin{proof}
Since $k$ is independently sampled in both $\calD_k$ and $\mathbf{x}$, we can write
\begin{align*}
    \E_{S \sim \calD_k}[f(S)] \quad &= \quad \E_{S \sim \calD \setminus k}[x_k \cdot f(S \cup k) + (1-x_k) \cdot f(S)] \enspace  \qquad \text{and}\\
    \E_{S \sim \mathbf{x}}[f(S)] \quad &= \quad \E_{S \sim \mathbf{x} \setminus k}[x_k \cdot f(S \cup k) + (1-x_k) \cdot f(S)] \enspace .
\end{align*}
Convex combinations of submodular functions are submodular, $\calD \setminus k$ is WNR by closure under projection (\Cref{prop:WNRProperties}), and the marginals of $\calD \setminus k$ are equal to the marginals of $\mathbf{x} \setminus k$. Therefore, by the induction hypothesis,
\[
    \E_{S \sim \calD \setminus k}[x_k \cdot f(S \cup k) + (1-x_k) \cdot f(S)] \quad \geq \quad \E_{S \sim \mathbf{x} \setminus k}[x_k \cdot f(S \cup k) + (1-x_k) \cdot f(S)] \enspace ,
\]
which implies $\E_{S \sim \calD_k}[f(S)] \geq \E_{S \sim \mathbf{x}}[f(S)]$.
\end{proof}

\begin{claim} \label{claim:CompIneqStep2}
    The first inequality of \eqref{eq:CompIneqSteps} holds, i.e., $\E_{S \sim \calD}[f(S)] \geq  \E_{S \sim \calD_k}[f(S)]$.
\end{claim}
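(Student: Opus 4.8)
\textbf{Proof proposal for \Cref{claim:CompIneqStep2}.}
The plan is to decompose both expectations by conditioning on $S \setminus k$ and to recognize the gap between them as a single covariance that the WNR condition controls. Write $R \coloneqq S \setminus k = S \cap [k-1]$. Under both $\calD$ and $\calD_k$ the random set $R$ has law $\calD \setminus k$; the two distributions differ only in how the inclusion event $\{k \in S\}$ interacts with $R$ (correlated under $\calD$, independent with marginal $x_k$ under $\calD_k$). Let $g(R) \coloneqq f(R \cup k) - f(R)$ denote the marginal contribution of element $k$.

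Conditioning on whether $k \in S$, I would write
\[
    \E_{S \sim \calD}[f(S)] \;=\; \E_{R \sim \calD \setminus k}[f(R)] \;+\; \E_{S \sim \calD}\big[\mathbbm{1}_{k \in S} \cdot g(S \setminus k)\big],
\]
and, using independence of $k$ under $\calD_k$,
\[
    \E_{S \sim \calD_k}[f(S)] \;=\; \E_{R \sim \calD \setminus k}[f(R)] \;+\; x_k \cdot \E_{R \sim \calD \setminus k}[g(R)] \enspace .
\]
Since $x_k = \E_{S \sim \calD}[\mathbbm{1}_{k \in S}]$ and $R$ has law $\calD \setminus k$ under $\calD$ as well, subtracting yields
\[
    \E_{S \sim \calD}[f(S)] - \E_{S \sim \calD_k}[f(S)] \;=\; \Cov_{S \sim \calD}\big[\mathbbm{1}_{k \in S},\, g(S \setminus k)\big] \enspace ,
\]
so it suffices to show this covariance is non-negative.

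The key structural observation is that submodularity of $f$ is exactly the statement that $g$ is antitone: for $R \subseteq R' \subseteq [k-1]$ we have $f(R \cup k) - f(R) \geq f(R' \cup k) - f(R')$. Hence $-g$, viewed as a function on $2^{[k]}$ that ignores coordinate $k$, is a monotone real-valued function. Applying the covariance reformulation of WNR (\Cref{claim:WNRCovariance}) with $i \coloneqq k$ to this monotone function gives $\Cov_{S \sim \calD}[-g(S \setminus k),\, \mathbbm{1}_{k \in S}] \leq 0$, which is precisely the inequality we want. I do not expect a genuine obstacle here; the only point to be careful about is that the marginal-contribution function $g$ is in general neither monotone nor non-negative, but WNR and \Cref{claim:WNRCovariance} are stated for arbitrary real-valued monotone functions, so applying them to $-g$ is legitimate. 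Together with \Cref{claim:CompIneqStep1}, this establishes the chain \eqref{eq:CompIneqSteps} and completes the induction.
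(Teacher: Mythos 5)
Your proposal is correct and is essentially the paper's argument in covariance form: both proofs reduce the gap between $\calD$ and $\calD_k$ to the correlation between $\mathbbm{1}_{k \in S}$ and the marginal-contribution function $f(S \cup k) - f(S \setminus k)$, observe that submodularity makes the negation of this function monotone, and invoke WNR (you via \Cref{claim:WNRCovariance}, the paper via the conditional-expectation form \eqref{eq:WNRCondition}, which the paper shows are equivalent). The only cosmetic difference is that the paper conditions on $k$ and divides by $x_k(1-x_k)$ where you compute a single covariance, which sidesteps even the need for the assumption $x_k \ne 0,1$.
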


\begin{proof}
Expanding expectations for $\calD_k$ (as in the proof of \Cref{claim:CompIneqStep1}) and moving it to LHS,
\[
    \E_{S \sim \calD}[f(S) - x_k \cdot f(S \cup k) - (1-x_k) \cdot f(S \setminus k)] \quad \geq \quad 0 \enspace .
\]
Conditioning on $k$ yields
\begin{align*}
    & \quad\quad\quad\;\; x_k \cdot \E_{S \sim \calD}[f(S \cup k) - x_k \cdot f(S \cup k) - (1-x_k) \cdot f(S \setminus k) \given k \in S] \\
    & + (1-x_k) \cdot \E_{S \sim \calD}[f(S \setminus k) - x_k \cdot f(S \cup k) - (1-x_k) \cdot f(S \setminus k) \given k \not\in S] \quad \geq \quad 0 \enspace ,
\end{align*}
which simplifies to
\[
    x_k(1-x_k) \Big(\E_{S \sim \calD}[f(S \cup k) - f(S \setminus k) \given k \in S] + \E_{S \sim \calD}[f(S \setminus k) - f(S \cup k) \given k \not\in S]\Big) \quad \geq \quad 0 \enspace .
\]
Dividing out $x_k (1-x_k)$ and moving the second term to RHS gives
\begin{equation} \label{eq:CompIneqStep2}
    \E_{S \sim \calD}[f(S \cup k) - f(S \setminus k) \given k \in S] \quad \geq \quad \E_{S \sim \calD}[f(S \cup k) - f(S \setminus k) \given k \not\in S] \enspace .
\end{equation}
Let $f_k(S) \coloneqq f(S \cup k) - f(S \setminus k)$. $f_k$ does not depend on $k$, and by the submodularity of $f$, $-f_k$ is a monotone function. Thus, \eqref{eq:CompIneqStep2} is directly implied by the WNR condition \eqref{eq:WNRCondition}.
\end{proof}

\Cref{claim:CompIneqStep1,claim:CompIneqStep2} complete the proof of \Cref{thm:CompIneqSufficient}.
\end{proof}

The following proposition, which we prove in \Cref{sec:NotTightProofs}, shows that WNR is not a necessary condition for Submodular Dominance.

\begin{proposition}[restate=CompIneqUnnecessary,name=]
    The distribution $\calD$ which samples uniformly from $\emptyset, \{1\}, \{2\}, \{1, 2\}, \{1, 3\}$, $\{2, 3\}$ satisfies Submodular Dominance, but $\calD$ violates WNR for $f(S) \coloneqq \max(\mathbbm{1}_{1 \in S}, \mathbbm{1}_{2 \in S})$ and $i = 3$.
\end{proposition}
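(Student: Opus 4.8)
The claim has two parts: (1) the given distribution $\calD$ satisfies Submodular Dominance, and (2) it violates WNR, witnessed by the specific $f$ and $i=3$. The plan is to handle these separately, starting with the easier violation of WNR.

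For part (2), first compute the marginals of $\calD$. Each of elements $1$ and $2$ appears in $4$ of the $6$ sets, so $x_1 = x_2 = 2/3$, and element $3$ appears in $2$ sets, so $x_3 = 1/3$. Now take $f(S) \coloneqq \max(\mathbbm{1}_{1 \in S}, \mathbbm{1}_{2 \in S})$, which is monotone, and $i = 3$. I would directly evaluate both sides of the WNR condition \eqref{eq:WNRCondition}. Conditioned on $3 \in S$, the set $S$ is uniform over $\{1,3\},\{2,3\}$, so $S \setminus 3 \in \{\{1\},\{2\}\}$ and $f(S \setminus 3) = 1$ in both cases, giving $\E[f(S\setminus 3) \given 3 \in S] = 1$. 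Conditioned on $3 \notin S$, the set is uniform over $\emptyset,\{1\},\{2\},\{1,2\}$, so $f(S \setminus 3) = f(S)$ equals $0$ on $\emptyset$ and $1$ on the other three, giving $\E[f(S \setminus 3)\given 3 \notin S] = 3/4$. Since $1 > 3/4$, the WNR inequality fails, establishing part (2).

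For part (1), I need $\E_{S \sim \calD}[f(S)] \geq F(\mathbf{x})$ for every submodular $f : 2^{\{1,2,3\}} \to \mathbb{R}$. The cleanest route is the approach used in \Cref{thm:CompIneqSufficient}: peel off element $3$. Write $\calD_3$ for the distribution that samples $S \sim \calD \setminus 3$ (the projection onto $\{1,2\}$) and adds $3$ independently with probability $x_3 = 1/3$. The projection $\calD \setminus 3$ is the distribution uniform over $\emptyset, \{1\}, \{2\}, \{1,2\}$ with an extra copy each of $\{1\}$ and $\{2\}$ (from the sets $\{1,3\},\{2,3\}$), i.e. it puts mass $1/6$ on $\emptyset$ and $\{1,2\}$ and mass $1/3$ on each of $\{1\}$ and $\{2\}$. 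This is a product distribution on $\{1,2\}$? Check: $\Pr[1 \in S] = \Pr[2 \in S] = 2/3$, and under independence $\Pr[S = \{1,2\}] = 4/9 \neq 1/6$, so it is \emph{not} a product distribution — but it is negatively correlated on the two-element ground set, hence WNR (on a two-element universe, negative correlation of the two singletons is exactly WNR). By \Cref{thm:CompIneqSufficient} applied to the two-element universe, $\calD \setminus 3$ satisfies Submodular Dominance, so exactly as in \Cref{claim:CompIneqStep1}, $\E_{S \sim \calD_3}[f(S)] \geq F(\mathbf{x})$ for all submodular $f$ (averaging $x_3 f(S \cup 3) + (1-x_3) f(S)$, a submodular function of $S \in 2^{\{1,2\}}$).

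It remains to show $\E_{S \sim \calD}[f(S)] \geq \E_{S \sim \calD_3}[f(S)]$. Following \Cref{claim:CompIneqStep2}, this reduces (after the same algebraic manipulation, dividing by $x_3(1-x_3)$) to the inequality $\E_{S \sim \calD}[f_3(S) \given 3 \in S] \geq \E_{S \sim \calD}[f_3(S) \given 3 \notin S]$, where $f_3(S) \coloneqq f(S \cup 3) - f(S \setminus 3)$ depends only on $S \cap \{1,2\}$ and, by submodularity, $-f_3$ is monotone on $2^{\{1,2\}}$; equivalently, writing $g \coloneqq -f_3$, I need $\E[g(S) \given 3 \in S] \leq \E[g(S) \given 3 \notin S]$ for every monotone $g : 2^{\{1,2\}} \to \mathbb{R}$ — which is just the WNR condition for $i = 3$, and we showed in part (2) that this can fail! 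So this route does \emph{not} work directly, which is exactly the point of the proposition: $\calD$ is not WNR. The main obstacle, then, is that Submodular Dominance for $\calD$ must be proven \emph{without} going through WNR.

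Instead, I would verify part (1) by a more hands-on argument exploiting that the universe has only three elements. A submodular function on $2^{\{1,2,3\}}$ (after subtracting $f(\emptyset)$, which changes nothing since it shifts both sides equally) is determined by $7$ values, and submodularity is a finite set of linear inequalities; both $\E_{S\sim\calD}[f(S)]$ and $F(\mathbf{x})$ are linear in these values. Thus $\E_{S\sim\calD}[f(S)] - F(\mathbf{x})$ is a linear functional on the polyhedral cone of (normalized) submodular functions, and it suffices to check nonnegativity on the extreme rays of that cone. A convenient generating set: the functions $g_T(S) = \mathbbm{1}_{S \cap T \neq \emptyset}$ for $\emptyset \neq T \subseteq \{1,2,3\}$ (these are monotone submodular "coverage-type" functions), together with $\pm$ the modular functions $S \mapsto \sum_{i \in S} c_i$ (on which equality holds since $\calD$ has marginals $\mathbf{x}$ and $F$ is exact on modular functions), and possibly the submodular-but-nonmonotone generators like cuts. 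Rather than fully classify extreme rays, the pragmatic plan is: reduce to \emph{monotone} submodular $f$ first (by a standard lift — add a large modular function; modular functions are free, and $F$ commutes with adding modular functions while $\calD$'s marginals match $\mathbf{x}$, so it suffices to prove the inequality for monotone submodular $f$), then observe every monotone submodular function on a $3$-element universe is a nonnegative combination of $\{g_T\}$ plus a nonnegative modular function — and verify $\E_{S \sim \calD}[g_T(S)] \geq F_{g_T}(\mathbf{x})$ by direct computation for each of the $7$ sets $T$. For instance $T = \{1,2\}$: $\E[g_T] = \Pr[S \cap \{1,2\} \neq \emptyset] = 5/6$ while $F_{g_T}(\mathbf{x}) = 1 - (1-x_1)(1-x_2) = 1 - 1/9 = 8/9 > 5/6$?? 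That would be a violation — so I must re-examine whether every monotone submodular function on three elements really is such a combination (it is not; the $g_T$ do not generate, and in fact $\calD$ is \emph{not} NCD-dominated on the pair $\{1,2\}$ in the direction of $g_{\{1,2\}}$ — wait, NCD requires $\Pr[T \subseteq S] \le \Pr_{\mathbf x}[T \subseteq S]$, i.e. $\Pr[\{1,2\}\subseteq S] = 1/3 \le 4/9$, fine, and $\Pr[S \cap \{1,2\} = \emptyset] = 1/6 \le 1/9$ is FALSE). Hmm — so actually $\calD$ is not even NCD?!

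Given this tension, let me recompute the marginals carefully, because by \Cref{thm:CompIneqNecessary} Submodular Dominance implies NCD, and the proposition asserts $\calD$ satisfies Submodular Dominance. The six sets are $\emptyset, \{1\}, \{2\}, \{1,2\}, \{1,3\}, \{2,3\}$. Then $\Pr[S \cap \{1,2\} = \emptyset]$: only $\emptyset$ qualifies, so $= 1/6$; and $\Pr_{\mathbf x}[S \cap \{1,2\} = \emptyset] = (1-x_1)(1-x_2) = (1/3)^2 = 1/9 < 1/6$, so NCD fails, contradicting the proposition — which means I have the wrong reading of $\calD$ or $x_1$. Reconsidering: perhaps "$\{1,2\}$" is meant to include the element $3$ implicitly or the list is different; more likely I should trust the proposition and infer that with the intended $\calD$ the NCD constraints hold. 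The \textbf{main obstacle}, therefore, is simply to pin down $\calD$ correctly and then do the finite verification; once the distribution is fixed, part (1) is a matter of checking the inequality $\E_{S \sim \calD}[f(S)] \geq F(\mathbf{x})$ on a generating set (extreme rays) of the cone of normalized submodular functions on three elements — a bounded, mechanical computation — while part (2) is the short direct evaluation above. I would structure the written proof as: (a) record the marginals; (b) do the WNR-violation computation; (c) reduce Submodular Dominance to monotone submodular $f$ via modular lifts; (d) enumerate the extreme rays of the monotone-submodular cone on $2^{\{1,2,3\}}$ and check each, the one delicate case being whichever ray stresses the pair correlations most.
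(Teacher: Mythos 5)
Your verification of the WNR violation (part (2)) is correct and matches the paper's. But part (1) never reaches a proof, and the reason is a concrete arithmetic error at the very start: element $1$ belongs to exactly three of the six sets ($\{1\},\{1,2\},\{1,3\}$), so the marginals are $x_1=x_2=1/2$ and $x_3=1/3$, not $x_1=x_2=2/3$. Every downstream computation inherits this mistake: your ``discovery'' that $\Pr_{S\sim\calD}[S\cap\{1,2\}=\emptyset]=1/6$ exceeds $(1-x_1)(1-x_2)$ evaporates once one notes $(1-x_1)(1-x_2)=1/4\ge 1/6$, so $\calD$ is in fact NCD and there is no tension with \Cref{thm:CompIneqNecessary}; likewise your test case $T=\{1,2\}$ gives $F_{g_T}(\mathbf{x})=1-1/4=3/4\le 5/6$, not $8/9$. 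Having talked yourself into believing the distribution was misstated, you end with an unexecuted plan (reduce to monotone $f$, enumerate extreme rays of the submodular cone on three elements) rather than a proof, and you yourself flag that you have not verified the generating-set claim on which that plan rests.

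The missing idea is which element to peel off. You correctly observe that peeling off element $3$ reduces to the WNR condition at $i=3$, which fails --- but the machinery of \Cref{thm:CompIneqSufficient} only needs \emph{one} element $k$ for which (a) the WNR condition holds at $k$ and (b) $\calD_k$ already satisfies Submodular Dominance. The paper takes $k=1$: the projection $\calD\setminus 1$ puts mass $1/3,\,1/3,\,1/6,\,1/6$ on $\emptyset,\{2\},\{3\},\{2,3\}$, which is exactly the product distribution on $\{2,3\}$ with marginals $(1/2,1/3)$, so $\calD_1$ is the full product distribution and $\E_{S\sim\calD_1}[f(S)]=F(\mathbf{x})$ with no induction needed. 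And the WNR condition does hold at $i=1$: conditioned on $1\in S$, the set $S\setminus 1$ is uniform on $\emptyset,\{2\},\{3\}$, while conditioned on $1\notin S$ it is uniform on $\emptyset,\{2\},\{2,3\}$, and the coupling $\emptyset\mapsto\emptyset$, $\{2\}\mapsto\{2\}$, $\{3\}\mapsto\{2,3\}$ exhibits stochastic dominance. \Cref{claim:CompIneqStep2} then gives $\E_{S\sim\calD}[f(S)]\ge\E_{S\sim\calD_1}[f(S)]=F(\mathbf{x})$. So part (1) is a short argument once the marginals are computed correctly and the right coordinate is chosen; as written, your proposal does not establish it.
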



\subsection{NCD is a Necessary Condition}

Since WNR is not equivalent to Submodular Dominance, we search for necessary conditions to better understand the relationship between negative dependence and Submodular Dominance.

\CompIneqNecessary*

\begin{proof}
Let $\calD$ be a distribution over $2^U$ with marginals $\mathbf{x}$ which satisfies Submodular Dominance. For any $T \subseteq U$, consider the functions
\[
    f_T(S) \quad \coloneqq \quad 1 - \mathbbm{1}_{T \subseteq S^c} \qquad \text{and} \qquad g_T(S) \quad \coloneqq \quad \vert S \cap T \vert - \mathbbm{1}_{T \subseteq S} \enspace .
\]
Equivalently, $f_T, g_T$ are the rank functions of the uniform matroids of rank $1$ and $\vert T \vert - 1$ over the ground set $T$. The only fact about matroid rank functions we use here is that all matroid rank functions are submodular (though one can also easily check that $f_T, g_T$ are submodular via the definition of submodularity). Since subtracting a linear function from a submodular function results in a submodular function, $-\mathbbm{1}_{T \subseteq S}$ and $-\mathbbm{1}_{T \subseteq S^c}$ are submodular functions. Thus, by Submodular Dominance we have
\begin{align*}
    -\Pr_{S \sim \calD}[T \subseteq S] \quad\, = \,\quad \E_{S \sim \calD}[-\mathbbm{1}_{T \subseteq S}] \quad\, &\geq \,\quad \E_{S \sim \mathbf{x}}[-\mathbbm{1}_{T \subseteq S}] \quad\, = \,\quad -\Pr_{S \sim \mathbf{x}}[T \subseteq S] \qquad \text{and} \\
    -\Pr_{S \sim \calD}[T \subseteq S^c] \quad = \quad \E_{S \sim \calD}[-\mathbbm{1}_{T \subseteq S^c}] \quad &\geq \quad \E_{S \sim \mathbf{x}}[-\mathbbm{1}_{T \subseteq S^c}] \quad = \quad -\Pr_{S \sim \mathbf{x}}[T \subseteq S^c] \enspace .
\end{align*}
Multiplying both sides by $-1$ yields the definition of NCD (\Cref{def:NCD}).
\end{proof}

The following two propositions, which we prove in \Cref{sec:NotTightProofs}, show that NCD is not a sufficient condition for Submodular Dominance.

\begin{proposition}[restate=CompIneqInsufficient,name=] \label{prop:NCDCounterexample}
    The distribution $\calD$ over $2^{[4]}$ which chooses uniformly at random $i \in [4]$, then returns w.p. $\nicefrac{1}{2}$ either $i$ or $[4] \setminus i$, is NCD. However, $\calD$ violates Submodular Dominance for the submodular function $f(S) \coloneqq \min(2, \vert S \vert)$.
\end{proposition}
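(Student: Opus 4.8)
The plan is to verify the two claims separately: first that $\calD$ is NCD, and then that it violates Submodular Dominance for $f(S) = \min(2, |S|)$. For the NCD part, I would compute $\Pr_{S \sim \calD}[T \subseteq S]$ and $\Pr_{S \sim \calD}[T \subseteq S^c]$ explicitly for each $T \subseteq [4]$ and compare with the product distribution. First note that every outcome of $\calD$ is either a singleton $\{i\}$ or its complement $[4]\setminus\{i\}$, each occurring with probability $\tfrac18$. By symmetry, all four elements have marginal $x_i = \tfrac12$ (element $j$ appears in the singleton $\{j\}$ and in the three triples $[4]\setminus\{i\}$ for $i \ne j$, giving $\tfrac18 + \tfrac38 = \tfrac12$). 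So the comparison product distribution is uniform on $2^{[4]}$. Then I would tabulate by $|T|$: for $|T| = 1$ both probabilities are $\tfrac12 = \Pr_{\mathbf{x}}$; for $|T| = 2$, $\Pr_\calD[T \subseteq S]$ counts only the triples containing $T$ (two of the four triples), giving $\tfrac28 = \tfrac14 = \Pr_{\mathbf{x}}[T\subseteq S]$, and similarly for $S^c$; for $|T| = 3$, $\Pr_\calD[T \subseteq S]$ counts the single triple equal to $T$, giving $\tfrac18 = \Pr_{\mathbf{x}}$, and by symmetry the same for $S^c$; for $|T| = 4$ both sides are $0$. In every case we get equality, so NCD holds (indeed with equality throughout).

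For the violation of Submodular Dominance, I would compute both sides of the inequality in \Cref{def:SubmodularDominance} for $f(S) = \min(2, |S|)$, which is monotone submodular. On the left, under $\calD$ every sampled set has size $1$ or $3$, so $f(S) = 1$ with probability $\tfrac12$ and $f(S) = 2$ with probability $\tfrac12$, giving $\E_{S \sim \calD}[f(S)] = \tfrac32$. On the right, under the uniform product distribution on $2^{[4]}$ the size $|S|$ is $\mathrm{Binomial}(4, \tfrac12)$, so $\Pr[|S| = 0] = \tfrac1{16}$, $\Pr[|S| = 1] = \tfrac4{16}$, and $\Pr[|S| \ge 2] = \tfrac{11}{16}$; hence $F(\mathbf{x}) = 0 \cdot \tfrac1{16} + 1 \cdot \tfrac4{16} + 2 \cdot \tfrac{11}{16} = \tfrac{26}{16} = \tfrac{13}{8}$. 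Since $\tfrac32 = \tfrac{12}{8} < \tfrac{13}{8}$, Submodular Dominance fails, completing the proof.

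The only real subtlety is keeping the case analysis for NCD organized — it helps to observe up front that $\calD$ is invariant under the full symmetric group $S_4$ acting on $[4]$, so the probabilities depend only on $|T|$, which reduces the check to four cases; the rest is routine counting. I would present the NCD computation as a short table indexed by $|T|$ and the Submodular Dominance computation as the two displayed expectation calculations above.
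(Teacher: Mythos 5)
Your proof is correct and takes essentially the same route as the paper's: exploit the symmetry of $\calD$ to reduce the NCD check to a short case analysis on $\vert T \vert$, then compute $\E_{S \sim \calD}[f(S)] = \nicefrac{12}{8} < \nicefrac{13}{8} = F(\mathbf{x})$. One small slip: for $\vert T \vert = 4$ the product-distribution side is $(\nicefrac{1}{2})^4 = \nicefrac{1}{16}$, not $0$, so you do not in fact have equality in every case --- but since NCD only requires $\Pr_{S \sim \calD}[T \subseteq S] \leq \Pr_{S \sim \mathbf{x}}[T \subseteq S]$, the bound $0 \leq \nicefrac{1}{16}$ still holds and the conclusion is unaffected.
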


\begin{proposition}[restate=CompIneqInsufficientH,name=]
    The distribution $\calD$ over $2^{[8]}$ which chooses uniformly at random $i \in A \coloneqq \{1, 2, 3, 4\}$ and $j \in B \coloneqq \{5, 6, 7, 8\}$, then returns w.p. $\nicefrac{1}{2}$ either $i \cup (B \setminus j)$ or $(A \setminus i) \cup j$, is NCD. However, $\calD$ violates Submodular Dominance for the submodular function $f(S) \coloneqq \min(2, \vert S \cap A \vert)$.
\end{proposition}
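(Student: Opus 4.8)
The plan is to establish the two assertions in turn: that $\calD$ is NCD, and that it violates Submodular Dominance for $f(S) \coloneqq \min(2, \vert S \cap A \vert)$. For the NCD part I would first record that, by symmetry, every element of $[8]$ has $\calD$-marginal exactly $\nicefrac12$, so the product benchmark is $\Pr_{S \sim \mathbf{x}}[T \subseteq S] = \Pr_{S \sim \mathbf{x}}[T \subseteq S^c] = 2^{-\vert T\vert}$ for every $T$. Next I would observe that $\calD$ is invariant under complementation: the complement of a sampled set $i \cup (B \setminus j)$ is exactly $(A \setminus i) \cup j$ for the same $i, j$, and vice versa, so complementing merely swaps the two equally likely ``types'' and leaves $\calD$ unchanged. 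Hence $\Pr_{S \sim \calD}[T \subseteq S^c] = \Pr_{S \sim \calD}[T \subseteq S]$, and it suffices to verify the single family of inequalities $\Pr_{S \sim \calD}[T \subseteq S] \le 2^{-\vert T\vert}$.

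To do that, set $a \coloneqq \vert T \cap A\vert$ and $b \coloneqq \vert T \cap B\vert$ and condition on the type of the sample. Within each type the restrictions $S \cap A$ and $S \cap B$ are independent and uniform on subsets of the appropriate sizes ($1$ and $3$ for type $1$, $3$ and $1$ for type $2$), so a short computation gives
\[
    \Pr_{S \sim \calD}[T \subseteq S] \;=\; \tfrac12\, p_a \bigl(1 - \tfrac{b}{4}\bigr) \;+\; \tfrac12 \bigl(1 - \tfrac{a}{4}\bigr) p_b ,
\]
where $p_m = 1$ if $m = 0$, $p_m = \nicefrac14$ if $m = 1$, and $p_m = 0$ if $m \ge 2$; here I also use the evident $A \leftrightarrow B$ symmetry of $\calD$, which makes the two summands symmetric in $a$ and $b$. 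I would then split into two cases. If $a, b \ge 1$, then $p_a, p_b \le \nicefrac14$ while $1 - \tfrac{b}{4}, 1 - \tfrac{a}{4} \le \nicefrac34$, so the right-hand side is at most $\nicefrac{3}{16} \le \nicefrac14 \le 2^{-a-b}$. If $b = 0$ (which, by the $A \leftrightarrow B$ symmetry, also covers $a = 0$), the expression simplifies to $\tfrac12 p_a + \tfrac12\bigl(1 - \tfrac{a}{4}\bigr)$, which equals $2^{-a}$ for $a \in \{0,1,2,3\}$ and equals $0 \le 2^{-4}$ for $a = 4$. This establishes NCD.

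For the violation of Submodular Dominance, first note that $f(S) = \min(2, \vert S \cap A\vert)$ is submodular, being the composition of the concave function $\min(2, \cdot)$ with the modular function $S \mapsto \vert S \cap A\vert$ (equivalently, it is the rank function of the rank-$2$ uniform matroid on $A$ with the elements of $B$ added as loops). A type-$1$ sample has $\vert S \cap A\vert = 1$ and a type-$2$ sample has $\vert S \cap A\vert = 3$, so $\E_{S \sim \calD}[f(S)] = \tfrac12 \cdot 1 + \tfrac12 \cdot 2 = \tfrac32$. Under the product distribution $\mathbf{x}$, however, $\vert S \cap A\vert$ is $\mathrm{Binomial}(4, \nicefrac12)$, so $F(\mathbf{x}) = \E\bigl[\min(2, \mathrm{Binomial}(4, \nicefrac12))\bigr] = \tfrac{1}{16}(0 + 4 + 12 + 8 + 2) = \tfrac{13}{8}$. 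Since $\tfrac32 < \tfrac{13}{8}$, we get $\E_{S \sim \calD}[f(S)] < F(\mathbf{x})$, so $\calD$ violates Submodular Dominance.

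The only step needing any care is the NCD verification; the two symmetries of $\calD$ — invariance under complementation and under swapping $A$ with $B$ — are precisely what reduce the casework to the two short cases above, so I would set those up first. Everything else is routine arithmetic, exactly as in the analogous \Cref{prop:NCDCounterexample}.
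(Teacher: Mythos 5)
Your overall route is sound and genuinely different from the paper's: you derive a closed-form expression for $\Pr_{S \sim \calD}[T \subseteq S]$ in terms of $a = \vert T \cap A\vert$ and $b = \vert T \cap B\vert$, and use complement-invariance of $\calD$ to dispose of the second NCD family, whereas the paper uses the permutation symmetries to reduce to four representative sets ($\{1,5\}$, $\{1,2,5\}$, $\{1,2,5,6\}$, $\{1,2,3,5\}$), recycles the preceding proposition for $T \subseteq A$, and checks each case by hand. Your formula is correct, your $b=0$ case is correct, and the Submodular Dominance computation ($\E_{S \sim \calD}[f(S)] = \nicefrac{3}{2} < \nicefrac{13}{8} = F(\mathbf{x})$) matches the paper exactly.

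However, the case $a, b \ge 1$ is broken as written. You bound the probability by $\nicefrac{3}{16}$ and then assert $\nicefrac{3}{16} \le \nicefrac{1}{4} \le 2^{-a-b}$; but for $a, b \ge 1$ we have $a + b \ge 2$, so $2^{-a-b} \le \nicefrac{1}{4}$, and the inequality you actually need, $\nicefrac{3}{16} \le 2^{-a-b}$, fails whenever $a + b \ge 3$ (e.g.\ $a=1, b=2$ requires a bound of $\nicefrac{1}{8}$). The crude bound $\nicefrac{3}{16}$ only suffices for $a = b = 1$. The repair is immediate from your own formula: if $\max(a,b) \ge 2$, WLOG $a \ge 2$ by the $A \leftrightarrow B$ symmetry, then $p_a = 0$ and the probability is $\tfrac12\bigl(1 - \tfrac{a}{4}\bigr) p_b$, which is $0$ if $b \ge 2$ and $\tfrac{4-a}{32}$ if $b = 1$; checking $\tfrac{4-a}{32} \le 2^{-a-1}$ for $a \in \{2,3,4\}$ gives $\nicefrac{1}{16} \le \nicefrac{1}{8}$, $\nicefrac{1}{32} \le \nicefrac{1}{16}$, and $0 \le \nicefrac{1}{32}$. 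With that case split restored, the proof is complete.
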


Thus, the class of distributions which satisfy Submodular Dominance is a strict subset of NCD distributions and a strict superset of WNR distributions. It is unclear whether the ``right'' answer will turn out to be a useful notion of negative dependence.

\section{Applications to Submodular Prophet Inequalities} \label{sec:SPIApplication}

In SPI, we have \emph{items} $U$, which are discrete random variables with disjoint images and arbitrary probability mass functions. We denote realizations of items as \emph{elements}. WLOG, let the image of $i \in U$ be $\{ij : j \in [m]\}$, and let the realization of $i$ be $ij$ w.p. $p_{ij}$. Let $E \coloneqq [n] \times [m]$ denote the set of elements. The distributions of each item are independent and known to us in advance.

We are given a set system $\calI \subseteq 2^U$ and a submodular objective function $f : 2^E \rightarrow \mathbb{R}_{\geq 0}$. Notice that while the items are independent, the elements do not follow a product distribution. As we are optimizing over the element-space, this is a non-trivial complication.

Each item arrives one-by-one. When an item arrives, we learn its realization, and must choose whether to accept or reject it. The set of accepted items must be in $\calI$, and the goal is to maximize $f$ on the realizations of the accepted items. The arrival order is chosen by an \emph{almighty adversary}, who knows in advance the outcomes of all randomness, such as the item realizations, the decisions of our algorithm, etc.

If there exists an $\alpha$-competitive algorithm compared to the prophet, we say there is an $\alpha$ SPI. Rubinstein and Singla~\cite{RS-SODA17} proved $\Omega(1)$ SPIs over matroids, and Chekuri and
Livanos~\cite{CL21} refined their analysis to obtain better constants, as well as results for a broader range of set systems. We further improve upon their approach using Submodular Dominance, obtaining results such as the first tight SPI for large rank uniform matroids.


\subsection{Core Approach: SPI for Bernoulli Items}

Before tackling the full problem, it is helpful to first consider a simplified version, \emph{Bernoulli SPI}, where each item $i$ is only a Bernoulli random variable, taking value $1$ w.p. $p_i$ and taking value $0$ otherwise. Here, there is no notion of elements (or rather, elements are effectively synonymous with items), so we consider a submodular objective function $f : 2^U \rightarrow \mathbb{R}_{\geq 0}$. As in the full problem, we have a set system constraint $\calI \subseteq 2^U$.

This is quite similar to the problem of submodular maximization from \Cref{sec:SubmodMax}, but with a stochastic component (each item being usable only w.p. $p_i$) and an online component (items are revealed one-by-one). Therefore, it makes sense to borrow the high level approach of optimizing the multilinear extension $F$, then rounding the fractional solution. Since $\calI$ is a discrete constraint and $F$ is a continuous function, the following relaxation is useful in offline submodular maximization:

\begin{definition}
    For any downward-closed set system $\calI \subseteq 2^U$, its \emph{polytope} $\calP_\calI \subseteq [0, 1]^n$ is the convex hull of the indicator vectors representing the maximal sets of $\calI$.
\end{definition}

For Bernoulli SPI, we consider a modified polytope $\calP'_\calI \coloneqq \{\mathbf{x} \in [0, 1]^n : \mathbf{x} \in \calP_\calI,\; x_i \leq p_i \; \forall i \in U\}$. Since the fractional solution corresponds to a distribution over $\calI$, the additional constraint $x_i \leq p_i$ ensures that no item is included more often than it takes value $1$. It turns out that we can efficiently optimize $F$ over $\calP'_\calI$ under mild conditions.

As for rounding, we can use \emph{online contention resolution schemes} (OCRS). OCRSs function in the following setting: we have a set system $\calI \subseteq 2^U$ and a distribution $\calD$ over $2^U$ with marginals $\mathbf{x}$. Let items $i \in S$ for some $S \sim \calD$ be called \emph{active}. The items then arrive one-by-one in adversarial order. When item $i$ arrives, we learn whether it is active, and if so, must decide to accept or reject it, subject to the set of accepted items being in $\calI$. An OCRS $\pi_{\calI, \calD}$ is an algorithm that plays this game. The following notion is a way to measure the performance of an OCRS.

\begin{definition}[$(b, c)$-selectable OCRS] \label{def:OCRSSelectability}
    For $b, c \in [0, 1]$, a set system $\calI \subseteq 2^U$, and a distribution $\calD$ over $2^U$ with marginals $\mathbf{x} \in b \cdot \calP_\calI$, an OCRS $\pi_{\calI, \calD}$ is \emph{$(b, c)$-selectable} if the probability of $\pi_{\calI, \calD}$ accepting $i$ is at least $c \cdot x_i$ for all $i \in U$. If $b = 1$, we say $\pi_{\calI, \calD}$ is $c$-selectable.
\end{definition}

Feldman, Svensson, and Zenklusen~\cite{FSZ15} obtained the following approximation result for rounding via \emph{greedy OCRSs} (we omit the definition as it is not relevant).

\begin{proposition}[\cite{FSZ15}] \label{prop:OCRSApproxProduct}
    For a set system $\calI \subseteq 2^U$, a monotone submodular function $f : 2^U \rightarrow \mathbb{R}_{\geq 0}$, and $\mathbf{x} \in b \cdot \calP_\calI$, applying a $(b, c)$-selectable greedy OCRS to $S \sim \mathbf{x}$ obtains $T \in \calI$ such that $\E_{S \sim \mathbf{x}}[f(T)] \geq c \cdot F(\mathbf{x})$. Further, the greedy OCRS can be efficiently modified such that even for non-monotone $f$, the modified greedy OCRS obtains $T \in \calI$ such that $\E_{S \sim \mathbf{x}}[f(T)] \geq \nicefrac{c}{4} \cdot F(\mathbf{x})$.
\end{proposition}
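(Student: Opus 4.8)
The plan is to prove the monotone bound and the non-monotone bound separately, in each case reducing to the per-element guarantee of $(b,c)$-selectability through a telescoping decomposition of $f$. Throughout we may assume $f(\emptyset) = 0$: the function $f - f(\emptyset)$ is again monotone, submodular, and non-negative, $F(\mathbf{x})$ decreases by exactly the constant $f(\emptyset) \ge 0$ while $\E[f(T)]$ decreases by the same constant, so (using $c \le 1$) the inequality for $f - f(\emptyset)$ implies the one for $f$. Fix an arbitrary linear order $1 < \cdots < n$ on $U$ and write $A_{<i} = A \cap \{1, \dots, i-1\}$ for $A \subseteq U$.

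For the monotone case, draw $S \sim \mathbf{x}$ and let $T = \pi(S) \subseteq S$ be the output of the greedy OCRS. Expanding $f(T)$ telescopically in the fixed order and using $f(\emptyset) = 0$ gives $f(T) = \sum_i \mathbbm{1}[i \in T]\, f(i \mid T_{<i})$; since $T_{<i} \subseteq S_{<i}$, submodularity and then monotonicity of $f$ give $f(i \mid T_{<i}) \ge f(i \mid S_{<i}) \ge 0$, hence $f(T) \ge \sum_i \mathbbm{1}[i \in T]\, f(i \mid S_{<i})$. The same telescoping applied to $S$ yields $F(\mathbf{x}) = \E_{S \sim \mathbf{x}}[f(S)] = \sum_i \E_{S \sim \mathbf{x}}[\mathbbm{1}[i \in S]\, f(i \mid S_{<i})] = \sum_i x_i \cdot \E_{S \sim \mathbf{x}}[f(i \mid S_{<i})]$, where the last step uses that $\mathbbm{1}[i \in S]$ is independent of $S_{<i}$ under the product distribution. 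Thus it suffices to prove, for every $i$,
\[
    \E\bigl[\mathbbm{1}[i \in T]\, f(i \mid S_{<i})\bigr] \quad \ge \quad c \cdot x_i \cdot \E_{S \sim \mathbf{x}}[f(i \mid S_{<i})] \enspace ,
\]
and summing over $i$ gives $\E[f(T)] \ge c \cdot F(\mathbf{x})$.

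To establish the per-element inequality, condition on the active prefix $S_{<i} = A$: then $f(i \mid A) \ge 0$ is a fixed number, $\Pr[i \in S \mid S_{<i} = A] = x_i$ by independence, and it remains to show $\Pr[i \in T \mid S_{<i} = A,\ i \in S] \ge c$. A black-box $(b,c)$-selectability assumption gives only the unconditional bound $\Pr[i \in T] \ge c \cdot x_i$, and a priori the event $\{i \in T\}$ could be anti-correlated with ``$S_{<i}$ being small,'' i.e.\ with $f(i \mid S_{<i})$ being large; what rules this out is the \emph{greedy} structure of the OCRS. For a greedy OCRS, $i$ is accepted exactly when, upon its arrival, adding $i$ to the currently accepted set keeps it inside a downward-closed family $\mathcal{F}_S$, so $i$ is certainly accepted whenever the stronger, order-robust event ``$B \cup i \in \mathcal{F}_S$ for every feasible $B \subseteq S \setminus i$'' occurs; greedy $(b,c)$-selectability is precisely a lower bound of $c$ on the probability of this event, and because it is determined by coordinates other than the $i$-th in a way that does not anti-correlate with the earlier coordinates, one can carry the conditioning on $S_{<i} = A$ through. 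I expect this step --- extracting a conditional, prefix- and order-robust form of selectability from the downward-closed structure --- to be the main obstacle, and it is essentially the technical content of \cite{FSZ15}; granting it, the remainder of the monotone case is bookkeeping.

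For the non-monotone case, modify the scheme so that whenever an active element is selectable upon arrival it is accepted only with an independent probability $\tfrac12$ (and otherwise discarded permanently). This keeps the scheme greedy, halves each per-element acceptance probability to at least $\tfrac{c}{2} x_i$, and additionally makes the scheme \emph{monotone} in the contention-resolution sense --- the probability that $i$ is ultimately accepted, conditioned on $i$ active, is non-increasing as further elements become active. For a monotone $\tfrac{c}{2}$-selectable scheme applied to a non-negative (possibly non-monotone) submodular $f$, invoke the standard argument of Chekuri--Vondr\'ak--Zenklusen~\cite{CVZ-SICOMP14}: conditioning on the final accepted set and using that each element outside it survives with probability at least $\tfrac12$, together with submodularity, costs a further factor of at most $2$ relative to $F(\mathbf{x})$. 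Chaining the two factors of $2$ gives $\E_{S \sim \mathbf{x}}[f(T)] \ge \tfrac{c}{4} \cdot F(\mathbf{x})$. The points needing care here are that the extra subsampling really does yield a monotone scheme and that the Chekuri--Vondr\'ak--Zenklusen inequality applies with exactly these constants; neither is deep, but both require attention to the definitions.
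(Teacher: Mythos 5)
First, a contextual point: the paper does not prove this proposition at all --- it is imported as a black box from Feldman, Svensson, and Zenklusen \cite{FSZ15} (the paper even declines to define a greedy OCRS), so there is no in-paper argument to compare against and you are in effect reproving a result of \cite{FSZ15}. Your skeleton for the monotone case --- telescoping $f(T)$ and $f(S)$ in a fixed order, using $T_{<i}\subseteq S_{<i}$ and submodularity to reduce to the per-element inequality $\E\bigl[\mathbbm{1}[i\in T]\,f(i\mid S_{<i})\bigr]\ge c\,x_i\,\E[f(i\mid S_{<i})]$ --- is the right reduction and matches how \cite{FSZ15} proceed.

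The genuine gap is in how you propose to establish that per-element inequality, and the route you sketch would fail. Conditioning on the exact prefix $S_{<i}=A$ and demanding $\Pr[i\in T\mid S_{<i}=A,\ i\in S]\ge c$ is strictly stronger than $(b,c)$-selectability and is false in general: selectability is an average over the randomness of $S$, and for an unfavourable prefix $A$ (say, a large one that saturates the constraint) the conditional acceptance probability can be far below $c$, even $0$. One cannot ``carry the conditioning through.'' The correct argument instead uses that (i) $i$ is accepted whenever $i\in S$ and the order-robust event $E_i$ (``$i$ can be added to every feasible subset of $S\setminus i$'') holds, (ii) $\Pr[E_i\mid i\in S]\ge c$ is exactly greedy $(b,c)$-selectability against the worst-case order, and (iii) both $\mathbbm{1}[E_i]$ and $f(i\mid S_{<i})$ are non-increasing functions of the independent coordinates $S\setminus i$, so by the Harris/FKG inequality for product measures they are positively correlated, giving $\E[\mathbbm{1}[E_i]\,f(i\mid S_{<i})\mid i\in S]\ge c\cdot\E[f(i\mid S_{<i})\mid i\in S]$. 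This correlation step is the actual content of the proof, and you explicitly defer it, so the argument is incomplete precisely at its crux. The non-monotone half has the same character: the factor-$4$ accounting is asserted rather than derived, and as stated it is internally inconsistent --- if your subsampled scheme really were monotone (in the CRS sense) and $\nicefrac{c}{2}$-selectable, the theorem of \cite{CVZ-SICOMP14} for monotone schemes applied to nonnegative submodular $f$ would already yield $\nicefrac{c}{2}\cdot F(\mathbf{x})$ with no further loss, so your ``two chained factors of $2$'' do not add up to the claimed $\nicefrac{c}{4}$ by the mechanism you describe.
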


In Bernoulli SPI, there is no notion of elements and we optimize over items. Thus, the distribution of active items is already a product distribution, and simply applying greedy OCRSs already yields approximation results using \Cref{prop:OCRSApproxProduct}.


\subsection{Generalizing to Arbitrary Discrete Random Variables}

We now return to the full version of SPI. Again, let $U$ be the set of items, $\calI \subseteq 2^U$ be a downward-closed set system with a solvable polytope, $E \coloneqq [n] \times [m]$ be the set of elements, $\mathbf{p} \in [0, 1]^{nm}$ be the element realization probabilities, and $f : 2^E \rightarrow \mathbb{R}_{\geq 0}$ be a submodular function. The first step is to compute a fractional solution. Chekuri and Livanos~\cite{CL21} define the polytope
\[
    \calP''_\calI \quad \coloneqq \quad \{\mathbf{x} \in [0, 1]^{nm} \;:\; \exists \mathbf{z} \in \calP_\calI \text{~satisfying~}{\textstyle \sum_j} x_{ij} = z_i \;\; \forall i \in U, \;\;\, x_{ij} \leq p_{ij} \;\; \forall ij \in E\} \enspace .
\]
Here, the summation constraint is a natural relaxation of $\calP'_\calI$ from the item-space to the element-space. Chekuri and Livanos prove a series of results\footnote{See Section 3 of~\cite{CL21} for details, in particular, Claim 3.4, Theorem 1.3, and Remark 3.7.} which culminates in the following (note that we do not require monotonicity of $f$):

\begin{proposition}[\cite{CL21}] \label{prop:SPICorGap}
    Let $\OPT$ be the expectation obtained by the prophet, $\calI \subseteq 2^U$ be a set system with a solvable polytope, and $f : 2^U \rightarrow \mathbb{R}_{\geq 0}$ be a submodular function. Then for any fixed $\epsilon > 0$, we can efficiently compute $\mathbf{x} \in b \cdot \calP''_\calI$ such that $F(\mathbf{x}) \geq (1-e^{-b}-\epsilon) \cdot \OPT$.
\end{proposition}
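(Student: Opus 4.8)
The plan is to reduce the statement to a continuous-greedy computation over $\calP''_\calI$, in three steps: bound $\OPT$ by a fractional optimum, exhibit a linear-optimization oracle for $\calP''_\calI$, and then run (measured) continuous greedy for ``time'' $b$. For the first step, fix the prophet's offline-optimal strategy and let $q_{ij}$ be the probability that it accepts item $i$ and $i$ realizes to element $ij$. Then $z_i \coloneqq \sum_j q_{ij}$ is the probability that $i$ is accepted, and since $\calI$ is downward-closed the vector $\mathbf{z}$ (an average of indicators of sets in $\calI$) lies in $\calP_\calI$, while $q_{ij} \le p_{ij}$ is immediate, so $\mathbf{q} \in \calP''_\calI$. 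The set $R$ of accepted active elements is a random subset of $E$ with element-marginals exactly $\mathbf{q}$, so by definition of the concave closure $f^+$ we get $\OPT = \E[f(R)] \le f^+(\mathbf{q}) \le \max_{\mathbf{y} \in \calP''_\calI} f^+(\mathbf{y}) =: \OPT^\ast$. It therefore suffices to compute $\mathbf{x} \in b \cdot \calP''_\calI$ with $F(\mathbf{x}) \ge (1-e^{-b}-\epsilon)\,\OPT^\ast$.

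Next I would check that $\calP''_\calI$ is solvable given that $\calP_\calI$ is, so that the continuous greedy is implementable. For a fixed linear objective $\langle \mathbf{c}, \mathbf{x}\rangle$ and a fixed per-item budget vector $\mathbf{z}$, the optimal $\mathbf{x}$ with $\sum_j x_{ij} = z_i$ and $x_{ij} \le p_{ij}$ greedily saturates the largest coefficients $c_{ij}$ up to their caps $p_{ij}$, giving a value $\sum_i g_i(z_i)$ with each $g_i$ concave and piecewise-linear; maximizing $\sum_i g_i(z_i)$ over $\mathbf{z} \in \calP_\calI$ is separable-concave maximization over a polytope with a linear-optimization oracle, which is efficiently solvable (e.g. by Frank--Wolfe on this outer problem). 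This also implements the per-step linear maximizations inside the continuous greedy.

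Then I would run continuous greedy from the origin over $\calP''_\calI$ for time $b$, discretized into $\mathrm{poly}(1/\epsilon)$ steps with $\nabla F$ estimated by sampling and Chernoff bounds; the output $\mathbf{x}$ lies in $b \cdot \calP''_\calI$. For monotone $f$ the standard analysis (Calinescu, Chekuri, P\'al, Vondr\'ak) already yields $F(\mathbf{x}) \ge (1 - e^{-b} - o(1))\,\OPT^\ast$, and folding the discretization and sampling errors into $\epsilon$ completes that case.

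For general non-negative submodular $f$ one must instead use the measured continuous greedy of Feldman, Naor, and Schwartz, and here lies the main obstacle: its off-the-shelf guarantee is only about $b\,e^{-b}\,\OPT^\ast$, because a factor $\prod_i(1 - y_i(t))$ appears along the trajectory. To recover $(1-e^{-b})$ I would first preprocess, splitting each element $ij$ into many copies each carrying at most a $\delta$-fraction of $p_{ij}$ (equivalently, splitting items), which costs only $O(\epsilon)$ in the objective but forces every marginal --- hence every coordinate of $y(t)$ --- to stay $O(\delta)$ with $\delta = \delta(\epsilon)$ polynomially small. Then $\prod_i(1 - y_i(t)) = 1 - O(\delta)$ throughout, and re-running the measured-greedy potential argument gives $F(\mathbf{x}) \ge (1 - O(\delta))(1 - e^{-b} - o(1))\,\OPT^\ast$; absorbing the copying loss into $\epsilon$ finishes the proof. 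The points needing care are that the splitting preserves feasibility and the value of $\OPT^\ast$, that $\delta$ is genuinely polynomially controllable, and that the small-coordinate regime really does upgrade the measured-greedy bound from $b\,e^{-b}$ to $1 - e^{-b}$.
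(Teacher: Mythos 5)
The paper does not actually prove this proposition --- it is imported from \cite{CL21} (their Claim 3.4, Theorem 1.3, and Remark 3.7) --- so your argument has to be judged on its own terms. Your monotone half is correct and is essentially the intended route: bound $\OPT$ by $f^+(\mathbf{q})$ for a point $\mathbf{q} \in \calP''_\calI$ built from the prophet's acceptance probabilities, observe that $\calP''_\calI$ inherits solvability from $\calP_\calI$ (your per-item fractional-knapsack reduction works, though it is simpler to note that $\calP''_\calI$ is the projection of a polytope with an efficient separation oracle and invoke the ellipsoid equivalence), and run continuous greedy for time $b$ against the concave-closure benchmark.

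The non-monotone half has a genuine gap, and it is not one of the ``points needing care'' that can be patched. First, the splitting preprocessing is unavailable: if you split an element $a$ into copies $a_1, a_2$ and define $f'(S) \coloneqq f(\pi(S))$ by collapsing copies, submodularity is destroyed for non-monotone $f$. Take $f(\emptyset) = 1$ and $f(\{a\}) = 0$: then $f'(\{a_1\}) + f'(\{a_2\}) = 0 < 1 = f'(\{a_1, a_2\}) + f'(\emptyset)$. (Collapsing preserves submodularity only for monotone $f$, because $\pi(S) \cap \pi(T) \supseteq \pi(S \cap T)$ with containment possibly strict.) Second, and more fundamentally, no repair can deliver the stated bound for general non-negative $f$ once $1 - e^{-b} > \nicefrac{1}{2}$: taking all items deterministic and $\calI = 2^U$, the proposition at $b = 1$ would efficiently produce $\mathbf{x}$ with $F(\mathbf{x}) \geq (1 - \nicefrac{1}{e} - \epsilon) \cdot \max_S f(S)$, and independently rounding $\mathbf{x}$ returns a set of expected value exactly $F(\mathbf{x})$, beating the $\nicefrac{1}{2}$ value-oracle barrier of Feige, Mirrokni, and Vondr\'{a}k for unconstrained non-monotone submodular maximization. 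So in the non-monotone case the guarantee at this step must either degrade (e.g., to the $b e^{-b}$ that measured continuous greedy actually gives) or restrict $b$; your instinct that this is ``the main obstacle'' is right, but it is an obstruction to the statement as you have read it, not a technicality in your proof.
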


It remains to round the fractional solution. While it is fairly straightforward to convert a $(b, c)$-selectable greedy OCRS for the item-space to a $(b, c)$-selectable greedy OCRS for the element-space (indeed, we do this in Algorithm~\ref{1}), we cannot obtain approximation results directly from \Cref{prop:OCRSApproxProduct} like in the Bernoulli case because the distribution of elements is not a product distribution. Chekuri and Livanos handle this by incurring an additional loss of $e^{-b}-\epsilon$ to ``mask'' the elements under a product distribution. We save this factor by re-analyzing a simpler algorithm.


\subsection{Improved Analysis} \label{sec:SPIImprovement}

Let $\mathbf{x} \in b \cdot \calP''_\calI$ be the solution computed as per \Cref{prop:SPICorGap}. Define $x_i \coloneqq \sum_{j} x_{ij}$, and define $\vec{x} \coloneqq (x_i : i \in U)$. By definition of $\calP''_\calI$ and the fact that $\mathbf{x} \in b \cdot \calP''_\calI$, we have that $\vec{x} \in b \cdot \calP_\calI$, so let $\pi_{\calI, \vec{x}}$ be an efficient $(b, c)$-selectable greedy OCRS.

We first consider monotone $f$. Our rounding algorithm is almost identical to \cite[Algorithm 1]{CL21}, removing some steps that our improved analysis demonstrates to be unnecessary.

\namelabel{1}
\begin{algorithm}
\SetAlgoLined
    $T_{\ALG} = \emptyset$ \\
    \For{$t \gets 1$ \KwTo $n$}{
        Let $i \in U$ be the item that arrives on day $t$ \\
        Let $ij \in E$ be the realization of $i$ \\
        With probability $x_{ij}/p_{ij}$, reveal active $i$ to $\pi_{\calI, \vec{x}}$, otherwise reveal inactive $i$ to $\pi_{\calI, \vec{x}}$ \\
        \If{$\pi_{\calI, \vec{x}}$ accepts $i$}{
            $T_{\ALG} \leftarrow T_{\ALG} \cup \{ij\}$
        }
    }
    Return $T_{\ALG}$
    \caption{\textsc{Monotone Rounding ($U, E, \mathbf{p}, f, \mathbf{x}, \pi_{\calI, \vec{x}}$)}}
\end{algorithm}

Denote element $ij$ as \emph{active} when $ij$ is the realization of $i$, and Algorithm~\ref{1} reveals active $i$ to $\pi_{\calI, \vec{x}}$. Since the elements do not follow a product distribution, we cannot apply \Cref{prop:OCRSApproxProduct} even though the algorithm acts like a greedy OCRS. However, we provide a new analysis which states that a $c$-approximation for product distributions implies a $c$-approximation for the following wider class of distributions.

\begin{definition}
    A \emph{product of singletons distribution} over $2^E$ with marginals $\mathbf{x} \in [0, 1]^{nm}$ such that $\sum_j x_{ij} \leq 1$ for all $i$ is a distribution which independently samples 0 or 1 elements from each set $\{ij : j \in [m]\}$ according to the marginals $\mathbf{x}$.
\end{definition}

It is not difficult to see that the active elements follow a product of singletons distribution with marginals $\mathbf{x}$. The following lemma, which we prove in \Cref{sec:POSasProduct}, draws a connection between product of singletons distributions and product distributions.

\begin{lemma}[restate=POSasProduct,name=] \label{lem:POSasProduct}
    Let $\calD$ be a product of singletons distribution over $2^E$ with marginals $\mathbf{x} \in [0, 1]^{nm}$. Let $x_i \coloneqq \sum_j x_{ij}$, and let $\vec{x} \coloneqq (x_i : i \in U)$. For any $\mathbf{u} \in [m]^n$, let $E_\mathbf{u} \coloneqq \{iu_i : i \in U\}$ and let $\calD_\mathbf{u}$ be a product distribution over $2^{E_\mathbf{u}}$ with marginals $\vec{x}$. Then for any  $g : 2^E \rightarrow \mathbb{R}$,
    \begin{align} \label{eq:POSasProduct}
         \E_{S \sim \calD}[g(S)] \quad = \quad \sum_{\mathbf{u} \in [m]^n} \bigg(\E_{S \sim \calD_\mathbf{u}}[g(S)] \cdot \prod_{i \in U} \frac{x_{iu_i}}{x_i}\bigg) \enspace .
    \end{align}
\end{lemma}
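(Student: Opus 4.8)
The plan is to unpack both sides of \eqref{eq:POSasProduct} by conditioning on which single element (if any) is chosen from each group $\{ij : j \in [m]\}$. Let me write $S \sim \calD$ as a union $S = \bigcup_{i \in U} S_i$ where each $S_i$ is independently either $\{ij\}$ with probability $x_{ij}$ (for $j \in [m]$) or $\emptyset$ with the remaining probability $1 - x_i$. On the left-hand side, I would partition the sample space according to the pattern of which groups are ``hit.'' The key observation is that the factor $\prod_{i \in U} x_{iu_i}/x_i$ on the right-hand side is exactly the conditional distribution of the index vector $\mathbf{u}$ given that \emph{every} group is hit: conditioned on $S_i \ne \emptyset$, we have $S_i = \{iu_i\}$ with probability $x_{iu_i}/x_i$, independently across $i$. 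So the right-hand side is $\E_{\mathbf{u}}\big[\E_{S \sim \calD_\mathbf{u}}[g(S)]\big]$ where $\mathbf{u}$ is drawn from this product conditional law.

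\medskip

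\textbf{Key steps, in order.}
First, I would formalize the coupling: define $\calD_\mathbf{u}$ as the product distribution over $2^{E_\mathbf{u}}$ with marginals $\vec{x}$, and note that a sample $S \sim \calD_\mathbf{u}$ is $\bigcup_i S_i'$ with $S_i'$ equal to $\{iu_i\}$ with probability $x_i$ and $\emptyset$ otherwise, independently. Second, I would show that $\calD$ can be generated by the following two-stage process: (i) sample $\mathbf{u} \in [m]^n$ with $\Pr[\mathbf{u}] = \prod_i x_{iu_i}/x_i$ — this is well-defined since $\sum_j x_{ij}/x_i = 1$; (ii) sample $S \sim \calD_\mathbf{u}$. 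To verify this generates $\calD$, I would check that for any fixed outcome $S^\star = \bigcup_i S_i^\star$ (each $S_i^\star$ being $\{ij_i^\star\}$ or $\emptyset$), the two-stage process assigns it probability $\prod_{i : S_i^\star \ne \emptyset} \big(\tfrac{x_{i j_i^\star}}{x_i} \cdot x_i\big) \cdot \prod_{i : S_i^\star = \emptyset} \big(\sum_{j} \tfrac{x_{ij}}{x_i} (1 - x_i)\big) = \prod_{i : S_i^\star \ne \emptyset} x_{i j_i^\star} \cdot \prod_{i : S_i^\star = \emptyset} (1 - x_i)$, which is exactly $\Pr_{S \sim \calD}[S = S^\star]$; here the empty-group factor collapses because $\sum_j x_{ij}/x_i = 1$, so the choice of $u_i$ is irrelevant when group $i$ is not hit. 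Third, applying the law of total expectation to this two-stage process immediately gives $\E_{S \sim \calD}[g(S)] = \sum_{\mathbf{u}} \Pr[\mathbf{u}] \cdot \E_{S \sim \calD_\mathbf{u}}[g(S)] = \sum_{\mathbf{u}} \big(\E_{S \sim \calD_\mathbf{u}}[g(S)] \cdot \prod_i \tfrac{x_{iu_i}}{x_i}\big)$, which is \eqref{eq:POSasProduct}.

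\medskip

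\textbf{Main obstacle.} The computation is essentially routine once the coupling is set up; the only subtlety — and the step I would be most careful about — is the collapsing of the ``empty group'' contributions. The identity $\sum_{j \in [m]} x_{ij}/x_i = 1$ is what makes the index $u_i$ genuinely a free choice (with no effect) whenever $S_i = \emptyset$, so that marginalizing over $u_i$ in the two-stage process recovers precisely the mass $1 - x_i$ that $\calD$ puts on $S_i = \emptyset$. If $x_i = 0$ for some $i$, one should note that group $i$ is never hit and the factors $x_{iu_i}/x_i$ should be read as the degenerate distribution on any fixed $u_i$ (or one simply drops such $i$ from $U$); I would dispatch this edge case with a one-line remark. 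A secondary minor point is that $g$ is an arbitrary real-valued function, so no monotonicity or submodularity is used — linearity of expectation is all that is needed — which is worth stating since it makes the lemma applicable in the non-monotone rounding step as well.
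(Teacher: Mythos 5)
Your proposal is correct and is essentially the paper's argument: both verify that the weighted combination on the right-hand side assigns each fixed set $T$ exactly the probability $\prod_{i \notin T^*}(1-x_i)\prod_{ij \in T} x_{ij} = \Pr_{S \sim \calD}[S = T]$, with the key step in each case being the collapse of the sum over the unconstrained coordinates of $\mathbf{u}$ via $\sum_j x_{ij}/x_i = 1$. Your two-stage-sampling framing is just a probabilistic restatement of the paper's algebraic rearrangement (and your remark on the $x_i = 0$ edge case is a small bonus the paper omits).
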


In simpler terms, $\calD_\mathbf{u}$ is the distribution which samples $S \sim \calD$, then replaces each element $ij \in S$ with the element $iu_i$. \Cref{lem:POSasProduct} states that any product of singletons distribution with marginals $\mathbf{x}$ can be written as a convex combination of product distributions with marginals $\vec{x}$.

\begin{lemma} \label{lem:SPIApproxMonotone}
    For monotone $f$, Algorithm~\ref{1} returns $T_{\ALG}$ such that $\E[f(T_{\ALG})] \geq c \cdot F(\mathbf{x})$.
\end{lemma}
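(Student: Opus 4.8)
The plan is to combine three ingredients: (i) the greedy OCRS $\pi_{\calI, \vec{x}}$ is $(b,c)$-selectable for the item-marginals $\vec{x} \in b \cdot \calP_\calI$; (ii) the active elements seen by Algorithm~\ref{1} follow a product of singletons distribution $\calD$ over $2^E$ with marginals $\mathbf{x}$; and (iii) Submodular Dominance (\Cref{thm:CompIneqSufficient}) together with \Cref{lem:POSasProduct} lets us lower-bound $\E_{S \sim \calD}[f(\pi(S))]$ by $F(\mathbf{x})$-type quantities. The key observation is that Algorithm~\ref{1} is exactly running the greedy OCRS $\pi_{\calI, \vec{x}}$ on the item-space, where item $i$ is revealed active with probability $x_i = \sum_j x_{ij}$ (the probability that $ij$ is the realization \emph{and} we reveal it, summed over $j$), and upon acceptance we add the realized element $ij$. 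So if I fix the realization profile $\mathbf{u} \in [m]^n$ — meaning the active element of item $i$ is $iu_i$ — then conditioned on this profile the set of active \emph{items} is a product distribution with marginals $\vec{x}$, and the accepted set $T_{\ALG}$ is a subset of $E_\mathbf{u} = \{iu_i : i \in U\}$.

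First I would define $g : 2^E \to \mathbb{R}_{\geq 0}$ by $g(S) \coloneqq \E[f(\pi_{\calI,\vec{x}}(S))]$, the expected value obtained by running the greedy OCRS on the item set corresponding to the active elements in $S$ and outputting the accepted elements; here the expectation is over the OCRS's internal randomness and the adversary is fixed/worst-case. Note $g$ need not be submodular, but it is monotone since $f$ is monotone and a greedy OCRS accepts a superset of items when more items are active (this monotonicity of greedy OCRSs is the property I would need to invoke — it should be stated or implicit in \cite{FSZ15}; if not, I restrict attention to the per-profile analysis below which avoids needing $g$ globally). By \Cref{lem:POSasProduct} applied to $g$, $\E[f(T_{\ALG})] = \E_{S \sim \calD}[g(S)] = \sum_{\mathbf{u}} \big(\E_{S \sim \calD_\mathbf{u}}[g(S)] \cdot \prod_i \frac{x_{iu_i}}{x_i}\big)$. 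For each fixed $\mathbf{u}$, the distribution $\calD_\mathbf{u}$ is a product distribution over $2^{E_\mathbf{u}}$ with marginals $\vec{x} \in b \cdot \calP_\calI$, and restricting $f$ to subsets of $E_\mathbf{u}$ gives a monotone submodular function on $2^{E_\mathbf{u}}$; so \Cref{prop:OCRSApproxProduct} applies and gives $\E_{S \sim \calD_\mathbf{u}}[g(S)] \geq c \cdot F_\mathbf{u}(\vec{x})$, where $F_\mathbf{u}$ is the multilinear extension of $f|_{2^{E_\mathbf{u}}}$. Plugging back in, $\E[f(T_{\ALG})] \geq c \sum_\mathbf{u} F_\mathbf{u}(\vec{x}) \prod_i \frac{x_{iu_i}}{x_i} = c \cdot \E_{S \sim \calD}[f(S)]$ by \Cref{lem:POSasProduct} again (this time applied to $f$ itself, reading the identity in reverse). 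Finally, since $\calD$ is a product of singletons distribution, it is NA — hence WNR — so by \Cref{thm:CompIneqSufficient} it satisfies Submodular Dominance: $\E_{S \sim \calD}[f(S)] \geq F(\mathbf{x})$. Chaining, $\E[f(T_{\ALG})] \geq c \cdot F(\mathbf{x})$.

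The main obstacle I anticipate is handling the almighty adversary correctly when conditioning on the realization profile $\mathbf{u}$: the adversary in SPI knows all realizations, so its ordering can depend on $\mathbf{u}$, but within a fixed $\mathbf{u}$ this is still just \emph{some} adversarial ordering of items, and \Cref{prop:OCRSApproxProduct}'s guarantee for greedy OCRSs holds against worst-case orderings, so the bound survives. A related subtlety is that the "activeness" of item $i$ in the OCRS game is determined by the combined randomness (which $ij$ realizes, and the independent coin of probability $x_{ij}/p_{ij}$), and I must check these combine to give exactly a product distribution with marginal $x_i$ per item, independent across items — this follows because realizations are independent across items and the extra coins are independent. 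The cleanest write-up conditions first on $\mathbf{u}$, invokes \Cref{prop:OCRSApproxProduct} per profile, then averages via \Cref{lem:POSasProduct}; this avoids ever needing $g$ to be globally well-behaved and sidesteps the monotone-greedy-OCRS technicality.
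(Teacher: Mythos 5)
Your proposal is correct and follows essentially the same route as the paper's proof: decompose the product-of-singletons distribution of active elements into the product distributions $\calD_{\mathbf{u}}$ via \Cref{lem:POSasProduct}, apply \Cref{prop:OCRSApproxProduct} per profile $\mathbf{u}$ against the worst-case ordering, re-average, and finish with Submodular Dominance (the paper argues the product of singletons distribution is WNR directly from \Cref{prop:WNRProperties} rather than via NA, but this is immaterial). Your worry about monotonicity of $g$ is unnecessary for exactly the reason you identify — \Cref{lem:POSasProduct} holds for arbitrary $g$, and the per-profile conditioning is precisely what the paper does.
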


\begin{proof}
Let $\calD$ be the distribution of active elements. While the adversary sees the item realizations and which items Algorithm~\ref{1} will reveal as active to $\pi_{\calI, \vec{x}}$, the adversary cannot influence $\calD$ because the decisions to reveal active $i$ do not depend on the item ordering.

Therefore, it is valid for us to ``partition'' the outcomes of randomness contributing to $\calD$. Since $\calD$ is a product of singletons distribution with marginals $\mathbf{x}$, \Cref{lem:POSasProduct} tells us that there exists a partition such that each part is a product distribution $\calD_\mathbf{u}$ over $2^{E_\mathbf{u}}$ with marginals $\vec{x}$. We fix some $\mathbf{u} \in [m]^n$ and analyze the performance of Algorithm~\ref{1} on the subset of randomness corresponding to the distribution $\calD_\mathbf{u}$.

Since $\calD_\mathbf{u}$ is a product distribution, $\vec{x} \in b \cdot \calP_\calI$, and the algorithm copies the acceptances of $\pi_{\calI, \vec{x}}$, Algorithm~\ref{1} acts exactly like a $(b, c)$-selectable greedy OCRS over $\calD_\mathbf{u}$. Most importantly, $\calD_\mathbf{u}$ being product distribution means we can apply \Cref{prop:OCRSApproxProduct} to get
\[
    \E_{S \sim \calD_\mathbf{u}}[f(T_{\ALG})] \quad \geq \quad c \cdot \E_{S \sim \calD_\mathbf{u}}[f(S)] \enspace .
\]
$T_{\ALG}$ is implicitly a randomized function of $S$, so we can rewrite LHS as
\[
    \E_{S \sim \calD_\mathbf{u}}\Big[\E[f(T_{\ALG}) \given S = S']\Big] \quad \geq \quad c \cdot \E_{S \sim \calD_\mathbf{u}}[f(S)] \enspace ,
\]
where the inner expectation is taken over the possible randomization of the underlying greedy OCRS $\pi_{\calI, \vec{x}}$ and the adversarial ordering of the items. As this holds for any $\mathbf{u}$, weighting the inequality and summing over all $\mathbf{u} \in [m]^n$ yields
\[
    \sum_{\mathbf{u} \in [m]^n} \bigg(\E_{S \sim \calD_\mathbf{u}}\Big[\E[f(T_{\ALG}) \given S = S']\Big] \cdot \prod_{i \in U} \frac{x_{iu_i}}{x_i}\bigg) \quad \geq \quad \sum_{\mathbf{u} \in [m]^n} \bigg(c \cdot \E_{S \sim \calD_\mathbf{u}}[f(S)] \cdot \prod_{i \in U} \frac{x_{iu_i}}{x_i}\bigg) \enspace .
\]
Factoring out the $c$ on RHS, then applying \Cref{lem:POSasProduct} to the functions $\E[f(T_{\ALG}) \given S = S']$ and $f(S)$ simplifies to $\E_{S \sim \calD}[f(T_{\ALG})] \geq c \cdot \E_{S \sim \calD}[f(S)]$.

A distribution which samples only sets of size $0$ or $1$ is WNR because conditioning on inclusion of an element excludes all other elements. Further, products of WNR distributions are WNR (\Cref{prop:WNRProperties}). Thus, product of singletons distributions are WNR, and applying Submodular Dominance (\Cref{thm:CompIneqSufficient}) on $\calD$ gives the following and completes the proof:
\[
    \E_{S \sim \calD}[f(T_{\ALG})] \quad \geq \quad c \cdot \E_{S \sim \calD}[f(S)] \quad \geq \quad c \cdot F(\mathbf{x}) \enspace . \qedhere
\]
\end{proof}

\begin{remark} \label{rem:SPIApproxGeneral}
    For general $f$, we can replace the greedy OCRS $\pi_{\calI, \vec{x}}$ by its efficient modification mentioned in \Cref{prop:OCRSApproxProduct}, then just repeat the proof of \Cref{lem:SPIApproxMonotone}. We lose an additional factor of $\nicefrac{1}{4}$ when we invoke \Cref{prop:OCRSApproxProduct} on the modified greedy OCRS, which gives us a $\nicefrac{c}{4}$-approximation algorithm when $f$ is not monotone.
\end{remark}

\SPI*

\begin{proof}
Combining \Cref{prop:SPICorGap} and \Cref{lem:SPIApproxMonotone} gives us a $c \cdot (1-e^{-b}-\epsilon)$ SPI for monotone non-negative submodular functions, and, as noted in \Cref{rem:SPIApproxGeneral}, a similar argument gives us a $\nicefrac{c}{4} \cdot (1-e^{-b}-\epsilon)$ SPI for general non-negative submodular functions. From~\cite{FSZ15}, we can efficiently construct greedy OCRSs satisfying the following properties:
\begin{itemize}[topsep=0pt,itemsep=0pt]
    \item $(b, 1-b)$-selectable over matroids, for $b \in [0, 1]$.
    
    \item $(b, e^{-2b})$-selectable over matchings, for $b \in [0, 1]$.
    
    \item $(b, \frac{1-2b}{2-2b})$-selectable over knapsacks, for $b \in [0, \nicefrac{1}{2}]$.
    
    \item $(1-o(1))$-selectable over uniform matroids of rank $k \rightarrow \infty$ \cite{CL21}.
\end{itemize}
To obtain the results in \Cref{table:SPINumbers}, we simply choose $b$ which maximizes $c \cdot (1-e^{-b}-\epsilon)$.
\end{proof}

\section{Applications to Rounding} \label{sec:Rounding}

A common problem setting is optimization constrained to some feasible set system $\calI$.

\begin{definition}
    For a downward-closed set system $\calI \subseteq 2^U$, its \emph{polytope} $\calP_\calI \subseteq [0, 1]^n$ is the convex hull of the indicator vectors representing the maximal sets of $\calI$.
\end{definition}

Under mild conditions, we can efficiently optimize over the polytope, then round the fractional solution $\mathbf{x} \in \calP_\calI$ to an integral solution $S \in \calI$. It is natural to think of $\mathbf{x}$ as a distribution over $\calI$ with those marginals. If these distributions exhibit certain properties, then sampling can be an effective rounding technique. We give results for set systems which satisfy the following property:

\begin{definition}
    A set system $\calI \subseteq 2^U$ \emph{admits WNR distributions} if for any $\mathbf{x} \in \calP_\calI$, there exists a WNR distribution over $\calI$ with marginals $\mathbf{x}$. If we can efficiently sample from these distributions, we say $\calI$ \emph{efficiently admits WNR distributions}.
\end{definition}


\subsection{Submodular Maximization} \label{sec:SubmodMax}

For a set system $\calI \subseteq 2^U$ and a monotone submodular function $f : 2^U \rightarrow \mathbb{R}_{\geq 0}$, a classical optimization problem is to efficiently find $T \in \calI$ such that $f(T)$ is a good approximation of $\max_{S \in \calI} f(S)$. We start by optimizing over $\calP_\calI$.

\begin{proposition}[\cite{Vondrak-STOC08}] \label{prop:ContGreedy}
    For any set system $\calI$ with a solvable polytope, we can efficiently compute $\mathbf{x} \in \calP_\calI$ such that $F(\mathbf{x}) \geq (1-\nicefrac{1}{e}-o(1)) \cdot \max_{S \in \calI} f(S)$.
\end{proposition}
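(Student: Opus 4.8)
The plan is to invoke Vondr\'{a}k's \emph{continuous greedy} algorithm, whose analysis yields exactly this guarantee. Since $f$ is monotone, $\mathrm{OPT} \coloneqq \max_{S \in \calI} f(S)$ is attained at a maximal set $O \in \calI$, so its indicator vector $\mathsf{1}_O$ lies in $\calP_\calI$. I would run a continuous trajectory $\mathbf{y} : [0,1] \to \calP_\calI$ with $\mathbf{y}(0) = \mathbf{0}$ and $\tfrac{d\mathbf{y}}{dt} = \mathbf{v}(t)$, where $\mathbf{v}(t) \coloneqq \argmax_{\mathbf{v} \in \calP_\calI} \langle \mathbf{v}, \nabla F(\mathbf{y}(t)) \rangle$ is found by maximizing a linear function over $\calP_\calI$ (this is the only place solvability of the polytope is used). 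Since $\mathbf{y}(1) = \int_0^1 \mathbf{v}(t)\,dt$ is an average of points of the convex set $\calP_\calI$, the output $\mathbf{y}(1)$ is feasible, and one then outputs $\mathbf{x} \coloneqq \mathbf{y}(1)$.

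The combinatorial core is the differential inequality $\tfrac{d}{dt} F(\mathbf{y}(t)) \geq \mathrm{OPT} - F(\mathbf{y}(t))$. By the chain rule, $\tfrac{d}{dt} F(\mathbf{y}(t)) = \langle \mathbf{v}(t), \nabla F(\mathbf{y}(t)) \rangle \geq \langle \mathsf{1}_O, \nabla F(\mathbf{y}(t)) \rangle$ by optimality of $\mathbf{v}(t)$. Submodularity of $f$ makes the partial derivatives of $F$ nonincreasing in the other coordinates, so a telescoping argument gives $\langle \mathsf{1}_O, \nabla F(\mathbf{y}) \rangle \geq F(\mathbf{y} \vee \mathsf{1}_O) - F(\mathbf{y})$; monotonicity of $f$ gives $F(\mathbf{y} \vee \mathsf{1}_O) \geq F(\mathsf{1}_O) = f(O) = \mathrm{OPT}$. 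Writing $\phi(t) \coloneqq \mathrm{OPT} - F(\mathbf{y}(t))$, this reads $\phi'(t) \leq -\phi(t)$, so $\phi(1) \leq \phi(0)\,e^{-1} \leq \mathrm{OPT}\cdot e^{-1}$ using $\phi(0) = \mathrm{OPT} - f(\emptyset) \leq \mathrm{OPT}$; hence $F(\mathbf{y}(1)) \geq (1 - e^{-1})\,\mathrm{OPT}$.

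To make the algorithm efficient, I would discretize time into $N = \mathrm{poly}(n)$ steps of length $1/N$, replacing the ODE by the update $\mathbf{y}((t{+}1)/N) = \mathbf{y}(t/N) + \tfrac{1}{N}\mathbf{v}_t$, where $\mathbf{v}_t$ maximizes $\langle \mathbf{v}, \widetilde{\nabla F}(\mathbf{y}(t/N)) \rangle$ over $\calP_\calI$ and $\widetilde{\nabla F}$ is an empirical estimate of $\nabla F$ obtained by drawing $\mathrm{poly}(n)$ independent samples $S \sim \mathbf{y}(t/N)$ and averaging $f(S \cup i) - f(S \setminus i)$ for each coordinate $i$. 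One shows that the second-order discretization error per step is $O(1/N^2)$ times a boundedness parameter of $F$, and the sampling error per estimate is $o(1/(Nn))\cdot \mathrm{OPT}$ with high probability by Hoeffding's inequality together with a union bound over the $Nn$ estimates; summing over the $N$ steps these contribute only an additive $o(1)\cdot\mathrm{OPT}$, so $F(\mathbf{x}) \geq (1 - e^{-1} - o(1))\cdot\mathrm{OPT}$ with high probability.

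I expect the main obstacle to be the efficiency bookkeeping rather than the combinatorial inequality: one must argue that the per-step errors do not compound across the $N$ iterations and that polynomially many samples suffice to drive every gradient estimate below the needed tolerance simultaneously. This is a careful but standard concentration-plus-union-bound argument; the clean $(1-1/e)$ bound itself drops straight out of monotonicity and submodularity of $F$ via the $\phi' \le -\phi$ estimate above.
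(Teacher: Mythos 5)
This proposition is imported verbatim from Vondr\'ak's work and the paper offers no proof of its own, so the only meaningful comparison is with the cited continuous greedy argument---which your sketch reconstructs correctly: feasibility of $\mathbf{y}(1)$ by convexity of $\calP_\calI$, the chain of inequalities $\langle \mathbf{v}(t), \nabla F(\mathbf{y})\rangle \geq \langle \mathsf{1}_O, \nabla F(\mathbf{y})\rangle \geq F(\mathbf{y}\vee\mathsf{1}_O) - F(\mathbf{y}) \geq \OPT - F(\mathbf{y})$ using submodularity and monotonicity, the resulting $(1-\nicefrac{1}{e})$ bound via $\phi' \leq -\phi$, and the discretization-plus-sampling bookkeeping that accounts for the $o(1)$ loss and the sole use of solvability. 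This matches the intended proof; nothing to add.
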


Now, we want to round $\mathbf{x}$ to an integral solution without losing value compared to $F(\mathbf{x})$. Pipage rounding~\cite{CCPV07} and randomized swap rounding~\cite{CVZ-arXiv09} achieve this for matroid polytopes, but it is unclear how to extend it. Submodular Dominance gives new approaches for submodular maximization.

\SubmodMax*

\begin{proof}
We compute $\mathbf{x} \in \calP_\calI$ as per \Cref{prop:ContGreedy}, then sample from a WNR distribution over $\calI$ with marginals $\mathbf{x}$. By \Cref{thm:CompIneqSufficient}, this returns $T \in \calI$ such that $\E[f(T)] \geq F(\mathbf{x}) \geq (1-\nicefrac{1}{e}-o(1)) \cdot \max_{S \in \calI} f(S)$.
\end{proof}


\subsection{Adaptivity Gaps for Stochastic Probing} \label{sec:StocProbing}

\emph{Stochastic Probing} is a generalization of submodular maximization with randomized inputs. Elements are replaced by \emph{items}, and we \emph{probe} items to learn their realizations. The goal is to maximize the expectation of a function over the realizations of probed items. We consider a simple version of the problem where we have a monotone submodular function $f : 2^U \rightarrow \mathbb{R}_{\geq 0}$ and each item $X_i$ contains element $i$ independently w.p. $p_i$ and is empty otherwise.

As probing reveals information, we differentiate between \emph{adaptive algorithms}, which behave differently conditioned on the realizations of probed items, and \emph{non-adaptive algorithms}.

\begin{definition}
    The \emph{adaptivity gap} is the ratio between the expectations obtained by the optimal adaptive algorithm and optimal non-adaptive algorithm.
\end{definition}

Asadpour and Nazerzadeh~\cite{AN15} give a tight result that the adaptivity gap for stochastic probing subject to a matroid constraint is $\frac{e}{e-1}$. The approach is to first define an auxiliary function $f'$, where $f'(S)$ is the expectation of $f$ upon probing items $\{X_i : i \in S\}$. It turns out that the multilinear extension $F'$ of $f'$ satisfies the property that $\max_{\mathbf{x} \in \calP_\calI} F'(\mathbf{x})$ is a $(1-\nicefrac{1}{e})$-approximation of the expectation obtained by the optimal adaptive algorithm.\footnote{We omit many of the finer details because our result does not alter this part of the analysis. Section 3 of \cite{AN15} covers this in depth.}

With this approximation result, the idea is to use $\mathbf{x}$ to design a non-adaptive algorithm. Simply probing each item w.p. $x_i$ may violate the matroid constraint, so Asadpour and Nazerzadeh design non-adaptive algorithms using pipage rounding. We go beyond matroids by designing non-adaptive algorithms using WNR distributions.

\StocProbing*

\begin{proof}
Our analysis follows that of \cite{AN15} until we obtain $\argmax_{\mathbf{x} \in \calP_\calI} F'(\mathbf{x})$. As $\calI$ admits WNR distributions, there exists a WNR distribution $\calD$ over $\calI$ with marginals $\mathbf{x}$. By \Cref{thm:CompIneqSufficient}, we have $\E_{S \sim \calD}[f'(S)] \geq F'(\mathbf{x})$. Therefore, the non-adaptive algorithm which samples $S \sim \calD$ and probes $\{X_i : i \in S\}$ obtains at least $F'(\mathbf{x})$ in expectation. No adaptive algorithm can obtain expectation greater than $\frac{e}{e-1} \cdot F'(\mathbf{x})$, so $\frac{e}{e-1}$ upper-bounds the adaptivity gap.
\end{proof}


\subsection{Contention Resolution Schemes} \label{sec:CRS}

\begin{definition}[CRS]
    A \emph{contention resolution scheme} (CRS) for a set system $\calI \subseteq 2^U$ and a distribution $\calD$ over $2^U$ with marginals $\mathbf{x}$ is a (possibly randomized) mapping $\pi_{\calI, \calD} : 2^U \rightarrow \calI$ such that for all $S \subseteq U$, we have $\pi_{\calI, \calD}(S) \subseteq S$.
\end{definition}

Contention resolution schemes have applications to submodular maximization as a rounding technique. The following is the simplest measure of performance for a CRS.

\begin{definition}[$c$-selectable CRS]
    For $c \in [0, 1]$, a set system $\calI \subseteq 2^U$, and a distribution $\calD$ over $2^U$ with marginals $\mathbf{x} \in \calP_\calI$, a CRS $\pi_{\calI, \calD}$ is \emph{$c$-selectable} if $\Pr_{S \sim \calD}[i \in \pi_{\calI, \calD}(S)] \geq c \cdot x_i$ for all $i \in U$.
\end{definition}

In submodular maximization, rounding fractional solutions is closely related to the multilinear extension, so study of CRSs is primarily centered around product distributions. However, as Dughmi~\cite{Dughmi-ICALP20,Dughmi-arXiv21} recently showed, CRSs over non-product distributions have applications in settings such as the Matroid Secretary Problem. We use Submodular Dominance to extend a selectability result to WNR distributions, which provides a direction by which other CRS results may be generalized to correlated distributions.

\CRS*

\begin{proof}
\cite{CVZ-SICOMP14} demonstrated this result for product distributions via strong LP duality. Following the same idea, Dughmi~\cite{Dughmi-ICALP20} reduced this result to proving Submodular Dominance:

\begin{proposition}[\cite{Dughmi-ICALP20}] \label{prop:CRSCondition}
    For a matroid $\calM$ and a distribution $\calD$ over $2^U$ with marginals $\mathbf{x} \in \calP_\calM$, there exists a $(1-\nicefrac{1}{e})$-selectable CRS if every submodular function $f : 2^U \rightarrow \mathbb{R}$ satisfies $\E_{S \sim \calD}[f(S)] \geq \E_{S \sim \mathbf{x}}[f(S)]$.
\end{proposition}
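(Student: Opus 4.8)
The plan is to reproduce Dughmi's argument, which runs the classical matroid CRS analysis of~\cite{CVZ-SICOMP14} --- a min--max/LP-duality computation of the best achievable selectability --- and inserts a single invocation of the Submodular Dominance hypothesis precisely where independence of the coordinates is otherwise used. First I would pass to a fractional relaxation: a randomized CRS is specified, on each realization $R \subseteq U$, by a distribution over independent subsets of $R$, and for selectability only the vector $y^R \in [0,1]^U$ of per-element acceptance probabilities matters. Since a vector supported on $R$ lies in the matroid polytope if and only if it is a convex combination of independent subsets of $R$ --- that is, $y^R \in \calP_\calM \cap \{y : y_i = 0 \ \forall i \notin R\}$, which is exactly the polytope of the restricted matroid $\calM|_R$ --- the achievable acceptance profiles are exactly the families $(y^R)_R$ with $y^R$ in that polytope, and such a family is realized by an explicit randomized CRS. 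Discarding coordinates with $x_i = 0$ (for which selectability is vacuous), the largest $c$ admitting a $c$-selectable CRS is
\[
    c^\star \;=\; \max_{(y^R)_R}\ \min_{i \in U} \frac{\E_{R \sim \calD}[y^R_i]}{x_i}
    \;=\; \max_{(y^R)_R}\ \min_{\theta \in \Delta}\ \sum_i \theta_i\, \E_{R \sim \calD}[y^R_i],
\]
where $\Delta \coloneqq \{\theta \in \mathbb{R}^U_{\geq 0} : \sum_i \theta_i x_i = 1\}$ and I used $\min_i a_i/x_i = \min_{\theta \in \Delta} \langle \theta, a \rangle$.

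Next I would apply Sion's minimax theorem: the outer maximization ranges over the compact convex set $\prod_R \calP_{\calM|_R}$, $\Delta$ is compact convex (each $\theta_i \leq 1/x_i$), and the objective is bilinear, so
\[
    c^\star \;=\; \min_{\theta \in \Delta}\ \E_{R \sim \calD}\Big[\max_{y \in \calP_{\calM|_R}} \langle \theta, y \rangle\Big]
    \;=\; \min_{\theta \in \Delta}\ \E_{R \sim \calD}[\rho_\theta(R)],
\]
where $\rho_\theta(R) \coloneqq \max\{\theta(I) : I \subseteq R,\ I \in \calM\}$ is the weighted matroid rank, and the minimum is attained by compactness. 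I would then record that $\rho_\theta$ is a monotone submodular function with $\rho_\theta(\emptyset) = 0$: via the standard greedy evaluation $\rho_\theta(R) = \int_0^\infty \mathrm{rk}_\calM\big(R \cap \{i : \theta_i \geq t\}\big)\, dt$, each integrand $R \mapsto \mathrm{rk}_\calM(R \cap U_t)$ is submodular (matroid rank composed with intersection by a fixed set), and integrals of submodular functions are submodular.

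Now the hypothesis of the proposition --- $\E_{S \sim \calD}[f(S)] \geq \E_{S \sim \mathbf{x}}[f(S)]$ for every submodular $f$ --- applied to $f = \rho_\theta$ gives $\E_{R \sim \calD}[\rho_\theta(R)] \geq \E_{R \sim \mathbf{x}}[\rho_\theta(R)]$, reducing everything to the product case. There the bound $\E_{R \sim \mathbf{x}}[\rho_\theta(R)] \geq (1 - \nicefrac{1}{e}) \langle \theta, \mathbf{x} \rangle$ holds: it is the correlation-gap inequality $F_h(\mathbf{x}) \geq (1 - \nicefrac{1}{e}) h^+(\mathbf{x})$ for monotone submodular $h$ (with $h^+$ the concave closure), combined with $\rho_\theta^+(\mathbf{x}) \geq \langle \theta, \mathbf{x} \rangle$, which holds because $\mathbf{x} \in \calP_\calM$ is a convex combination of independent sets $I$ each satisfying $\rho_\theta(I) = \theta(I)$; this is exactly the balanced-CRS bound of~\cite{CVZ-SICOMP14}. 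Since $\langle \theta, \mathbf{x} \rangle = 1$ on $\Delta$, we conclude $c^\star \geq 1 - \nicefrac{1}{e}$, and the family $(y^R)_R$ attaining $c^\star$ yields a $(1-\nicefrac{1}{e})$-selectable CRS.

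I expect the obstacles to be bookkeeping rather than conceptual: justifying the minimax exchange cleanly (compactness of $\Delta$ after deleting zero coordinates, bilinearity of the game, attainment on both sides), and verifying the two structural facts --- that restricting $\calP_\calM$ to the coordinates of $R$ gives precisely $\calP_{\calM|_R}$, and that the weighted matroid rank is submodular. The product-case inequality is a known result and can be cited directly from~\cite{CVZ-SICOMP14}.
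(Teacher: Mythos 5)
Your proof is correct and follows essentially the argument underlying the cited result: the paper takes \Cref{prop:CRSCondition} as a black box from \cite{Dughmi-ICALP20}, whose proof is exactly the LP-duality/minimax reduction of \cite{CVZ-SICOMP14} that you reproduce---pass to acceptance profiles $y^R$ in the polytope of $\calM|_R$, exchange max and min to reduce the optimal selectability to $\min_{\theta}\E_{R\sim\calD}[\rho_\theta(R)]$ for the monotone submodular weighted rank $\rho_\theta$, invoke the Submodular Dominance hypothesis in place of independence, and finish with the product-distribution correlation-gap bound. All steps check out, including the two structural facts you flag (that fixing the coordinates outside $R$ to zero yields the polytope of the restricted matroid, and that $\rho_\theta$ is submodular), and the concave-closure step is unaffected by the paper's convention that $\calP_\calM$ is the convex hull of bases, since bases are independent sets with $\rho_\theta(B)=\theta(B)$.
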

\Cref{thm:CRS} follows directly from \Cref{prop:CRSCondition} and \Cref{thm:CompIneqSufficient}.
\end{proof}

\section{Conclusion and Open Questions}

In this paper, we explore Submodular Dominance and its applications. In the process, we introduce a notion of negative dependence that we refer to as Weak Negative Regression (WNR), which is a natural generalization of both Negative Association (NA) and Negative Regression (NR) and may be of use in other applications. We prove that WNR distributions satisfy Submodular Dominance, and that all distributions satisfying Submodular Dominance also satisfy Negative Cylinder Dependence (NCD). Finally, we give a variety of applications for Submodular Dominance, improving the best known submodular prophet inequalities, developing new rounding techniques, and generalizing results for contention resolution schemes to negatively dependent distributions.

\medskip
\textbf{Sampling for More General Set Systems.}
Although our results for negatively distributions satisfying Submodular Dominance already have several applications, their usage could be broadened further by finding new techniques to generate negatively dependent distributions. An interesting future direction is to design algorithms to sample from negatively dependent distributions for more general set systems. For example, can we efficiently sample from a WNR/NA/NR distribution for any marginals in a given matroid polytope? We remark that \cite{PSV17} claimed such a result for NA distributions, but later, a gap in their proof was found. Max-entropy distributions over matroids are also not negatively dependent in general, as it is known that there exist matroids for which the uniform distribution (which is entropy-maximizing without constrained marginals) exhibits positive correlations~\cite{Jerrum06}.

\medskip
\textbf{Approximate Submodular Dominance.}
While we showed that NCD distributions do not always satisfy Submodular Dominance, one question is whether these weaker notions of negative dependence obtain constant-factor approximation variants of Submodular Dominance; that is, for a non-negative submodular function $f$, what distributions satisfy $\E_{S \sim \calD}[f(S)] \geq O(1) \cdot F(\mathbf{x})$? What about for monotone $f$? Another direction is to generalize Submodular Dominance to a larger class of functions. XOS functions are functions that can be expressed as the maximum of a collection of linear functions, and are a strict superset of submodular functions. While no non-product distribution satisfies ``XOS Dominance'' (consider $\max(X_1 + X_2, 1)$ and $\max(X_1, X_2)$, which are both XOS; the former decreases in expectation if $X_1$ and $X_2$ are negatively correlated, the latter if $X_1$ and $X_2$ are positively correlated), we might similarly ask if approximate versions hold for XOS functions.

\medskip
\textbf{Concentration Inequalities.}
Another important direction is understanding concentration inequalities for negatively dependent distributions. Submodular Dominance demonstrates that the expectation of negatively dependent distributions behaves favorably compared to product distributions, but we may also be interested in whether these distributions are concentrated around their mean. We know dimension-dependent concentration inequalities for arbitrary Lipschitz functions over NR distributions~\cite{GV18}. Proving dimension-independent concentration inequalities for submodular functions is an interesting future direction.

\appendix


\section{Missing Proofs}


\subsection{Weak Negative Regression Proofs} \label{sec:WNRProofs}

\WNRisSuperset*

\begin{proof}
We proved in \Cref{sec:WNR} that the union of NA and NR distributions is a subset of WNR, and that WNR distributions are a subset of NCD distributions. \Cref{sec:NotTightProofs} provides example distributions demonstrating strict containment of WNR in NCD.

To show strict containment of NA and NR in WNR, we consider the following distribution $\calD$, which was given by Joag-Dev and Proschan~\cite{JP83}.\footnote{They studied NA distributions over continuous random variables, and gave this distribution as an example that Negative Orthant Dependence (this is equivalent to NCD for Bernoulli random variables) does not imply NA.} We treat $\calD$ as a distribution over Bernoulli random variables $\mathbf{X} = (X_1, X_2, X_3, X_4)$ to simplify notation.

\begin{table}[ht]
\begin{center}
{\renewcommand{\arraystretch}{1.23}
\begin{tabular}{|C{1.3cm}|C{1.7cm}|C{1.5cm}|C{1.5cm}|C{1.5cm}|C{1.5cm}|C{1.5cm}|} \hline
    $\calD$ & \multicolumn{6}{c|}{$(X_1, X_2)$} \\ \hline \multirow{6}{*}{$(X_3, X_4)$}
    &          & (0, 0) & (0, 1) & (1, 0) & (1, 1) & Marginal \\ \cline{2-7}
    &   (0, 0) & 0.0577 & 0.0623 & 0.0623 & 0.0577 & 0.24     \\ \cline{2-7}
    &   (0, 1) & 0.0623 & 0.0677 & 0.0677 & 0.0623 & 0.26     \\ \cline{2-7}
    &   (1, 0) & 0.0623 & 0.0677 & 0.0677 & 0.0623 & 0.26     \\ \cline{2-7}
    &   (1, 1) & 0.0577 & 0.0623 & 0.0623 & 0.0577 & 0.24     \\ \cline{2-7}
    & Marginal & 0.24   & 0.26   & 0.26   & 0.24   &          \\ \hline
\end{tabular}}
\end{center}
\caption{A distribution which is WNR, but not NA or NR.}
\label{table:WNRDistribution}
\end{table}

$\calD$ violates NA because $\Cov_{\mathbf{X} \sim \calD}[X_1X_2, X_3X_4] > 0$, and $\calD$ violates NR because the conditional expectation $\E_{\mathbf{X} \sim \calD}[X_1X_2 \given X_3 = 1, X_4]$ is increasing in $X_4$.

By observing that the value in column 2 is larger than in column 4 for any row of \Cref{table:WNRDistribution}, we see that for any $x_3, x_4 \in \{0, 1\}$, the condtional expectation $\E_{\mathbf{X} \sim \calD}[X_2 \given X_3 = x_3,\, X_4 = x_4,\, X_1]$ is decreasing in $X_1$. Therefore, we can convert the distribution $\calD$ conditioned on $X_1 = 0$ into the distribution $\calD$ conditioned on $X_1 = 1$ by only transferring probability mass ``downwards," which cannot increase the expectation of a monotone function $f : \{0, 1\}^4 \rightarrow \mathbb{R}$. Thus, for any such function which does not depend on $X_1$,
\[
    \E_{\mathbf{X} \sim \calD}[f(\mathbf{X}) \given X_1 = 1] \quad \leq \quad \E_{\mathbf{X} \sim \calD}[f(\mathbf{X}) \given X_1 = 0] \enspace ,
\]
which is the WNR condition. Since $X_1, X_2$ and $(X_1, X_2), (X_3, X_4)$ are both exchangeable, we can repeat this analysis for all $X_i$. Thus, $\calD$ is WNR, but neither NA nor NR.
\end{proof}

\WNRProperties*

\begin{proof}
Closure under projection follows trivially because the WNR condition \eqref{eq:WNRCondition} is satisfied for all monotone functions, and monotone functions restricted to a subset of elements are still monotone.

For closure under products, let $\calA$ and $\calB$ be WNR distributions over $2^A$ and $2^B$ for disjoint $A, B$, and let $\calD$ be their product. WLOG, fix $i \in A$ and a monotone function $f : 2^{A \cup B} \rightarrow \mathbb{R}$. Since $\calA$ and $\calB$ are independent, $\E_{S \sim \calD}[f(S \setminus i) \given S \cap B = T] = \E_{S \sim \calA}[f((S \setminus i) \cup T)]$. Since $\calA$ is WNR and $f(S \cup T)$ is still a monotone function,
\[
    \E_{S \sim \calD}[f(S \setminus i) \given S \cap B = T,\, i \in S] \quad \leq \quad \E_{S \sim \calD}[f(S \setminus i) \given S \cap B = T,\, i \not\in S] \enspace .
\]
Taking expectations over $S \cap B$ gives $\E_{S \sim \calD}[f(S \setminus i) \given i \in S] \leq \E_{S \sim \calD}[f(S \setminus i) \given i \not\in S]$, completing the proof.
\end{proof}


\subsection{Submodular Dominance Example Distributions} \label{sec:NotTightProofs}

\CompIneqUnnecessary*

\begin{proof}
Using the definition of $\calD_k$ from the proof of \Cref{thm:CompIneqSufficient}, notice that $\calD_1$ is a product distribution. Therefore, we only need the analysis in \Cref{claim:CompIneqStep2} to follow, which only requires that the WNR condition \eqref{eq:WNRCondition} holds for $i = 1$. $\calD$ conditioned on $1 \in S$ samples $\emptyset$, $\{2\}$, and $\{3\}$ w.p. $\nicefrac{1}{3}$, and $\calD$ conditioned on $1 \not\in S$ samples $\emptyset$, $\{2\}$, and $\{2, 3\}$ w.p. $\nicefrac{1}{3}$, which cannot obtain lower expectation for a monotone function, so the WNR condition holds for $i = 1$ and $\calD$ satisfies Submodular Dominance.

$\calD$ violates WNR for $f$ and $i = 3$ because $\calD$ conditioned on $3 \in S$ always samples either $1$ or $2$, whereas $\calD$ conditioned on $3 \not\in S$ can sample $\emptyset$. Thus, there exist distributions which satisfy Submodular Dominance but violate WNR.
\end{proof}

\CompIneqInsufficient*

\begin{proof}
Notice that $\calD$ is identical under permutations of elements. Further, because $i$ or $[4] \setminus i$ is returned with equal probability, we have the property that for any $T \subseteq [4]$,
\[
    \Pr_{S \sim \calD}[T \subseteq S] \quad = \quad \Pr_{S \sim \calD}[T \subseteq S^c] \enspace .
\]
Therefore, it is sufficient to show that $\Pr_{S \sim \calD}[1, 2 \in S] \leq \nicefrac{1}{4}$, $\Pr_{S \sim \calD}[1, 2, 3 \in S] \leq \nicefrac{1}{8}$, and $\Pr_{S \sim \calD}[1, 2, 3, 4 \in S] \leq \nicefrac{1}{16}$ to prove $\calD$ is NCD.
\begin{itemize}[topsep=0pt,itemsep=0pt]
    \item For $1, 2 \in S$, we need to choose $i = 3, 4$, then return $[4] \setminus i$. This occurs w.p. $\nicefrac{1}{2} \cdot \nicefrac{1}{2} = \nicefrac{1}{4}$.
    
    \item For $1, 2, 3 \in S$, we need to choose $i = 4$, then return $[4] \setminus i$. This occurs w.p. $\nicefrac{1}{4} \cdot \nicefrac{1}{2} = \nicefrac{1}{8}$.
    
    \item There is no way for $1, 2, 3, 4 \in S$.
\end{itemize}
Thus, $\calD$ is NCD. $f$ is a matroid rank function, so it is submodular. Letting $\mathbf{x}$ be the marginals of $\calD$, a simple expected value computation shows that $\E_{S \sim \calD}[f(S)] = \nicefrac{12}{8} < \nicefrac{13}{8} = F(\mathbf{x})$, so $\calD$ violates Submodular Dominance.
\end{proof}

\CompIneqInsufficientH*

\begin{proof}
This example extends the previous example to a homogeneous distribution. Similar to the previous example, $\calD$ is closed under permutations of $A$, permutations of $B$, and swaps of $A$ with $B$. Since we already showed any $T \subseteq A$ satisfies the NCD condition, it is sufficient to show that $\Pr_{S \sim \calD}[1, 5 \in S] \leq \nicefrac{1}{4}$, $\Pr_{S \sim \calD}[1, 2, 5 \in S] \leq \nicefrac{1}{8}$, $\Pr_{S \sim \calD}[1, 2, 5, 6 \in S] \leq \nicefrac{1}{16}$, and $\Pr_{S \sim \calD}[1, 2, 3, 5 \in S] \leq \nicefrac{1}{16}$ (since only sets of size $4$ are drawn, if $\vert T \vert > 4$ it automatically satisfies the NCD condition).
\begin{itemize}[topsep=0pt,itemsep=0pt]
    \item For $1, 5 \in S$, we need to choose $i = 1$ and $j = 6, 7, 8$, then return $i \cup (B \setminus j)$, or choose $i = 2, 3, 4$ and $j = 5$, then return $(A \setminus i) \cup j$. This occurs w.p. $2 \cdot \nicefrac{1}{4} \cdot \nicefrac{3}{4} \cdot \nicefrac{1}{2} = \nicefrac{3}{16} \leq \nicefrac{1}{4}$.
    
    \item For $1, 2, 5 \in S$, we need to choose $i = 3, 4$ and $j = 5$, then return $(A \setminus i) \cup j$. This occurs w.p. $\nicefrac{1}{2} \cdot \nicefrac{1}{4} \cdot \nicefrac{1}{2} = \nicefrac{1}{16} \leq \nicefrac{1}{8}$.
    
    \item There is no way for $1, 2, 5, 6 \in S$.
    
    \item For $1, 2, 3, 5 \in S$, we need to choose $i = 4$ and $j = 5$, then return $(A \setminus i) \cup j$. This occurs w.p. $\nicefrac{1}{4} \cdot \nicefrac{1}{4} \cdot \nicefrac{1}{2} = \nicefrac{1}{32} \leq \nicefrac{1}{16}$.
\end{itemize}
Thus, $\calD$ is NCD, and we again have $\E_{S \sim \calD}[f(S)] = \nicefrac{12}{8} < \nicefrac{13}{8} = \E_{S \sim \mathbf{x}}[f(S)]$, so $\calD$ violates Submodular Dominance.
\end{proof}


\subsection{Product of Singletons is a Convex Combination of Product Distributions} \label{sec:POSasProduct}

\POSasProduct*

\begin{proof}
For some weights $p_S$, we can rewrite RHS of \eqref{eq:POSasProduct} as
\[
    \sum_{S \subseteq E} g(S) \cdot p_S \enspace .
\]
Our approach is to show that $p_T = \Pr_{S \sim \calD}[S = T]$ for any $T \subseteq E$. Then the summation is simply the expectation of $g$ over $\calD$ and we are finished.

Fix some set $T \subseteq E$. The distributions for which $T$ is in the image of $\calD_\mathbf{u}$ are those where for all $ij \in T$, $u_i = j$. Therefore,
\[
    p_T \quad = \quad \sum_{\substack{\mathbf{u} \in [m]^n \\ u_i = j \, \forall ij \in T}} \bigg(\Pr_{S \sim \calD_\mathbf{u}}\big[S = T\big] \cdot \prod_{i \in U} \frac{x_{iu_i}}{x_i}\bigg) \enspace .
\]
Let $T^* \subseteq U$ be a set where $i \in T^*$ if there exists some $j$ for which $ij \in T$. Then,
\[
    p_T \quad = \quad \sum_{\substack{\mathbf{u} \in [m]^n \\ u_i = j \, \forall ij \in T}} \bigg(\prod_{i \in T^*} x_i \prod_{i \not\in T^*} (1-x_i) \cdot \prod_{i \in T^*} \frac{x_{iu_i}}{x_i} \prod_{i \not\in T^*} \frac{x_{iu_i}}{x_i}\bigg) \enspace .
\]
We combine the products over $i \in T^*$, and move the first product over $i \not\in T^*$ outside the summation as the inner term does not depend on $\mathbf{u}$.
\[
    p_T \quad = \quad \prod_{i \not\in T^*} (1-x_i) \cdot \sum_{\substack{\mathbf{u} \in [m]^n \\ u_i = j \, \forall ij \in T}} \bigg(\prod_{i \in T^*} x_{iu_i} \prod_{i \not\in T^*} \frac{x_{iu_i}}{x_i}\bigg) \enspace .
\]
Because the summation is restricted to $\mathbf{u}$ where $u_i = j$ for all $ij \in T$, the coordinates of $\mathbf{u}$ for $i \in T^*$ can only take one value. Thus, the product over $i \in T^*$ is always the same, and can be factored out of the summation.
\[
    p_T \quad = \quad \prod_{i \not\in T^*} (1-x_i) \cdot \prod_{ij \in T} x_{ij} \cdot \sum_{\substack{\mathbf{u} \in [m]^n \\ u_i = j \, \forall ij \in T}} \bigg(\prod_{i \not\in T^*} \frac{x_{iu_i}}{x_i}\bigg) \enspace .
\]
As we just observed, the summation only enforces a condition on $i \in T^*$, so we sum up over all possible $u_i \in [m]$ for $i \not\in T^*$. We can rewrite this as
\begin{align*}
    p_T \quad &= \quad \prod_{i \not\in T^*} (1-x_i) \cdot \prod_{ij \in T} x_{ij} \cdot \prod_{i \not\in T^*} \bigg(\sum_{j \in [m]} \frac{x_{ij}}{x_i}\bigg) \\
    &= \quad \prod_{i \not\in T^*} (1-x_i) \cdot \prod_{ij \in T} x_{ij} \cdot \prod_{i \not\in T^*} \bigg(\frac{1}{x_i} \cdot \sum_{j \in [m]} x_{ij}\bigg) \\
    &= \quad \prod_{i \not\in T^*} (1-x_i) \cdot \prod_{ij \in T} x_{ij} \cdot \prod_{i \not\in T^*} \bigg(\frac{1}{x_i} \cdot x_i\bigg) \\
    &= \quad \prod_{i \not\in T^*} (1-x_i) \cdot \prod_{ij \in T} x_{ij} \\
    &= \quad \Pr_{S \sim \calD}[S = T] \enspace .
\end{align*}
Since these computations follow for any $T \subseteq E$, we have
\[
    \E_{S \sim \calD}[g(S)] \quad = \quad \sum_{S \subseteq E} g(S) \cdot p_S \quad = \quad \sum_{\mathbf{u} \in [m]^n} \bigg(\E_{S \sim \calD_\mathbf{u}}[g(S)] \cdot \prod_{i \in U} \frac{x_{iu_i}}{x_i}\bigg) \enspace ,
\]
which completes the proof.
\end{proof}

\begin{small}
\bibliographystyle{alpha}
\bibliography{bibliography}
\end{small}

\end{document}